\documentclass[conference]{IEEEtran}

%\addtolength{\topmargin}{9mm}

%\IEEEoverridecommandlockouts
% The preceding line is only needed to identify funding in the first footnote. If that is unneeded, please comment it out.
\usepackage{cite}
\usepackage{amsmath,amsthm,amssymb,amsfonts}
\usepackage{graphicx}
\usepackage{textcomp}
\usepackage{xcolor}
\usepackage{color}
\usepackage{setspace}

\usepackage{mathrsfs}
\usepackage{pifont}
\usepackage{bm}
\usepackage{pgfplots}
\usepackage{tikz}
\usetikzlibrary{arrows}
\usepackage{subfigure}
\usepackage{graphicx,booktabs,multirow}
\usepackage{upgreek}
\usepackage{bm}
\usepackage{mathrsfs}
\usepackage{algorithm}
\usepackage{algorithmicx}
\usepackage{algpseudocode}
\newcommand{\vct}[1]{{\bm #1}}

\definecolor{colorhkust}{RGB}{20,43,140}
\definecolor{colortsinghua}{RGB}{116,52,129}
\definecolor{color1}{RGB}{128,0,0}

\newtheorem{theorem}{Theorem}

\newtheorem{proposition}{Proposition}

\newtheorem{example}{Example}

%\setcounter{tocdepth}{2}  % set levels of table of content

%% Macros by BM

\newcommand{\trace}{{\rm Tr}}

%\newcommand{\rank}{\mathrm{rank}}

% Comment these for final submission
%\newcommand{\changeBM}[1]{\textcolor{blue}{#1}}
%\newcommand{\changeYS}[1]{\textcolor{red}{#1}}
%\renewcommand{\baselinestretch}{0.9875}   % 0.9875

% Uncomment these for final submission 

%% End: Macros by BM

  % The real numbers.
  % The real numbers.
  % The Natural numbers
  % The integer numbers.
 % The complex numbers.

%\DeclareMathOperator*{\re}{Re}
%\DeclareMathOperator*{\im}{Im}

\mathchardef\re="023C
\mathchardef\im="023D

\IEEEoverridecommandlockouts
\begin{document}

\title{How Neural Architectures Affect Deep Learning for Communication Networks?}
%\thanks{Identify applicable funding agency here. If none, delete this.}

%
%\author{\IEEEauthorblockN{1\textsuperscript{st} Given Name Surname}
%\IEEEauthorblockA{\textit{dept. name of organization (of Aff.)} \\
%\textit{name of organization (of Aff.)}\\
%City, Country \\
%email address}
%\and
%\IEEEauthorblockN{2\textsuperscript{nd} Given Name Surname}
%\IEEEauthorblockA{\textit{dept. name of organization (of Aff.)} \\
%\textit{name of organization (of Aff.)}\\
%City, Country \\
%email address}
%\and
%\IEEEauthorblockN{3\textsuperscript{rd} Given Name Surname}
%\IEEEauthorblockA{\textit{dept. name of organization (of Aff.)} \\
%\textit{name of organization (of Aff.)}\\
%City, Country \\
%email address}
%\and
%\IEEEauthorblockN{4\textsuperscript{th} Given Name Surname}
%\IEEEauthorblockA{\textit{dept. name of organization (of Aff.)} \\
%\textit{name of organization (of Aff.)}\\
%City, Country \\
%email address}
%\and
%\IEEEauthorblockN{5\textsuperscript{th} Given Name Surname}
%\IEEEauthorblockA{\textit{dept. name of organization (of Aff.)} \\
%\textit{name of organization (of Aff.)}\\
%City, Country \\
%email address}
%\and
%\IEEEauthorblockN{6\textsuperscript{th} Given Name Surname}
%\IEEEauthorblockA{\textit{dept. name of organization (of Aff.)} \\
%\textit{name of organization (of Aff.)}\\
%City, Country \\
%email address}
%}

\author{
   \IEEEauthorblockN{Yifei Shen$^\dagger$, \textit{Student Member}, \textit{IEEE}, Jun Zhang$^\dagger$, \textit{Fellow}, \textit{IEEE}, and Khaled B. Letaief$^{\dagger\ddagger}$}, \textit{Fellow}, \textit{IEEE}\\
  \IEEEauthorblockA{$^\dagger$Dept. of ECE, The Hong Kong University of Science and Technology, Hong Kong\\
  	$^\ddagger$ Peng Cheng Laboratory, Shenzhen, China\\
  	Email:{ yshenaw@connect.ust.hk, eejzhang@ust.hk, eekhaled@ust.hk} 
  	\thanks{This work was supported by the Hong Kong Research Grants Council under Grant No. 16210719 and 15207220.}}

}

\maketitle

\begin{abstract} In recent years, there has been a surge in applying deep learning to various challenging design problems in communication networks. The early attempts adopt neural architectures inherited from applications such as computer vision, which suffer from poor generalization, scalability, and lack of interpretability. To tackle these issues, domain knowledge has been integrated into the neural architecture design, which achieves near-optimal performance in large-scale networks and generalizes well under different system settings. This paper endeavors to theoretically validate the importance and effects of neural architectures when applying deep learning to communication network design. We prove that by exploiting permutation invariance, a common property in communication networks, graph neural networks (GNNs) converge faster and generalize better than fully connected multi-layer perceptrons (MLPs), especially when the number of nodes (e.g., users, base stations, or antennas) is large. Specifically, we prove that under common assumptions, for a communication network with $n$ nodes, GNNs converge $O(n \log n)$ times faster and their generalization error is $O(n)$ times lower, compared with MLPs. 
\end{abstract}

\begin{IEEEkeywords}
Communication networks, deep learning, graph neural networks, neural tangent kernel.
\end{IEEEkeywords}

\section{Introduction}\label{sec:intro}
Deep learning has recently emerged as a revolutionary technique for various applications in communication networks, e.g., resource management \cite{sun2018learning}, data detection \cite{he2020model}, and joint source-channel coding \cite{shao2021learning}. The early attempts adopt neural architectures inherited from applications such as computer vision, e.g., fully connected multi-layer perceptrons (MLPs) or convolutional neural networks (CNNs) \cite{sun2018learning,shen2018lora,eisen2019learning}. Although these classic architectures achieve near-optimal performance and provide computation speedup for small-scale networks, the performance is severely degraded when the number of clients becomes large. For example, for FDD massive MIMO beamforming, the performance of CNNs is near-optimal for a two-user network while an $18\%$ gap to the classic algorithm exists with $10$ users \cite{ma2021neural}. Moreover, these neural architectures generalize poorly when the system settings (e.g., the signal-to-noise ratio or the user number) in the test dataset are different from those in the training dataset. For example, for power control in cloud radio-access networks (Cloud-RANs), the performance degradation of MLPs can be more than $50\%$ when the SNR in the test dataset is slightly different from that in the training dataset \cite{shen2018lora}. Dense communication networks, a characteristic of 5G, usually involve hundreds of clients, and the user number and SNR change dynamically. Hence, it is very challenging to apply MLP-based methods in practical wireless networks.

To improve scalability and generalization, recent works incorporated the domain knowledge of target tasks to improve the architectures of neural networks, e.g., unrolled neural networks \cite{he2019model,shi2021algorithm,monga2021algorithm}, and group invariant neural networks \cite{keriven2019universal,shen2020graph}. Particularly, graph neural networks (GNNs) have recently attracted much attention thanks to their superior performance on large-scale networks \cite{lee2019graph,eisen2019optimal,shen2020graph,jiang2020learning,satorras2021neural,chowdhury2021unfolding,guo2021learning}. GNNs achieve good scalability, generalization, and interpretability by exploiting the permutation invariance property in communication networks \cite{shen2020graph,eisen2019optimal,guo2021learning}. For example, for the beamforming problem, a GNN trained on a network with $50$ users is able to achieve near-optimal performance on a network with $1000$ users \cite{shen2020graph}. In \cite{jiang2020learning}, GNNs were applied to resource allocation without channel state information (CSI) in intelligent reflecting surface (IRS) aided systems. It was shown that GNNs not only generalize well across different SNRs and different numbers of users, but the reflecting angles generated by GNNs are also easy to interpret.

Despite the empirical successes, it remains elusive why these architectures outperform unstructured MLPs and how much performance gains we can obtain via improving the neural architecture. Meanwhile, for reliable operation in real systems, it is crucial to provide theoretical guarantees and understand when the neural network works. For deep learning-based methods, the test performance, e.g., the sum rate or spectrum efficiency, is controlled by the \emph{convergence rate} at the training stage and the \emph{generalization error} at the test stage. Specifically, if the convergence speed is too slow, we cannot even obtain a low training loss, let alone performing well at the test stage. Additionally, combining the generalization error and training error provides an upper bound for the test performance. Unfortunately, the existing theoretical analysis of MLPs and GNNs for communication networks \cite{sun2018learning,qiang2019deep,shen2020graph} cannot characterize either of them.

The technical difficulty of the theoretical analysis lies in the non-convex nature of neural networks. Fortunately, there is a recent breakthrough that connects overparameterized neural networks and neural tangent kernels (NTK), which makes the training objective convex in the functional space \cite{jacot2018neural}. Based on this result, we will demonstrate the importance and effects of neural architectures when applying deep learning
to designing communication networks. Specifically, we observe that the convergence and generalization are determined by the alignment between the eigenvectors of the \emph{random} NTK matrix and the label vectors (Theorems \ref{thm:ntk_conv} and \ref{thm:gen_all}). Thus, we theoretically characterize how neural architectures affect convergence and generalization by studying this alignment (Theorems \ref{thm:conv} and \ref{thm:gen}). Specifically, we prove that under certain assumptions, for a communication network with $n$ nodes, GNNs converge $O(n \log n)$ times faster and the generalization error is $O(n)$ times lower, compared with MLPs. This demonstrates that GNNs are superior to MLPs in large-scale communication networks. To the best of our knowledge, this paper is the first attempt to theoretically study the convergence benefits of structured neural networks from both the communication and machine learning communities. Hence the developed results are not only timely but also will have significant impacts on the design, analysis, and performance evaluation of communication networks.

\section{Preliminaries}\label{sec:pre}
\subsection{Permutation Invariance}
Let $[n]$ represent the set $\{1, \cdots, n \}$, and denote the permutation operator as $\pi:[n] \rightarrow [n]$. For the vector-form variable $\bm{\gamma} \in \mathbb{C}^d$ and matrix-form variable $\bm{\Gamma} = [\bm{\gamma}_1,\cdots,\bm{\gamma}_{n}] \in \mathbb{C}^{n \times d}$, the permutation  is defined as 
\begin{align*}
(\pi \star \bm{\gamma})_{ (\pi(i_1)) } = \bm{\gamma}_{ (i_1) }, \quad (\pi \star \bm{\Gamma})_{ (\pi(i_1),:) } = \bm{\Gamma}_{ (i_1,:) }. 
\end{align*} 
A function $f(\cdot)$ is called \emph{permutation invariant} if for any $\pi$, we have $f(\pi \star \bm{\gamma} ) = f(\bm{\gamma} )$ or $f(\pi \star \bm{\Gamma} ) = f(\bm{\Gamma})$ when the input is a vector or matrix, respectively.

\begin{example}\label{exp:sum} (Sum and weighted sum) It is easy to check that the sum function is permutation invariant as $\sum_{i=1}^n x_i = \sum_{i=1}^n x_{\pi(i)}$. The weighted sum $\sum_{i=1}^n w_i x_i$ is not permutation invariant if the vector variable is $\bm{\gamma} = [x_i]_{i=1}^n$. However, if the variables are $\bm{\gamma}_i = [w_i, x_i]$, then we have 
\begin{align*}
    f(\bm{\Gamma}) = \sum_{i=1}^n w_i x_i = \sum_{i=1}^n w_{\pi(i)} x_{\pi(i)} = f(\pi(\bm{\Gamma})),
\end{align*}
which makes the weighted sum permutation invariant.
\end{example}

\subsection{Permutation Invariant Problems in Communication Networks}
In this paper, we consider the following permutation invariant optimization problem,
\begin{equation}\label{eq:opt_pro}
\begin{aligned}
&\mathscr{P}:\underset{\bm{\Gamma}}{\text{minimize}}
& & g(\bm{\Gamma})
& \text{ subject to }
& & Q(\bm{\Gamma}) \leq 0
\end{aligned},
\end{equation}
such that $g(\bm{\Gamma}) = g(\pi \star \bm{\Gamma})$, $Q(\bm{\Gamma}) = Q(\pi \star \bm{\Gamma}), \forall \pi$. 

We next present the power control problem in a $K$-user interference channel as a specific example. Let $p_{k}$ denote the transmit power of the $k$-th transmitter, $h_{k,k} \in \mathbb{C}$ denote the direct-link channel between the $k$-th transmitter and receiver, $h_{k,j} \in \mathbb{C}$ denote the cross-link channel between transmitter $j$ and receiver $k$, $s_k \in \mathbb{C}$ denote the data symbol for the $k$-th receiver, and $n_k \sim \mathcal{CN}(0,\sigma_k^2)$ is the additive Gaussian noise. The signal-to-interference-plus-noise ratio (SINR) for the $k$-th receiver is given by $\text{SINR}_k = \frac{|h_{k,k}|^2p_k}{\sum_{i\neq k}|h_{k,i}|^2p_i+\sigma_k^2}$. The power control problem is formulated as follows:
\begin{align*}
&\underset{\vct{p}}{\text{maximize}}
& & \sum_{k=1}^{K} w_k \log_2 \left(1+ \text{SINR}_k \right) \\
& \text{subject to}
& & 0 \leq p_k \leq 1, \forall k,
\end{align*}
To elaborate the permutation invariance property of this problem, we consider a permuted problem with parameters $w'_{\cdot}, h'_{\cdot, \cdot}, p'_{\cdot}, \sigma_{\cdot}$, such that $p'_k = p_{\pi(k)}, w'_k = w_{\pi(k)}$, $h'_{k,i} = h_{\pi(k),\pi(i)}, \sigma_k = \sigma'_{\pi(k)}$.

Under this permutation, we have $\sum_{i\neq k} |h_{k,i}|^2 p_k + \sigma_k^2 = \sum_{i \neq \pi(k)} |h'_{\pi(k),i}|^2 p_{\pi(k)} + (\sigma'_{\pi(k)})^2$ and thus $\text{SINR}_{k} = \text{SINR}'_{\pi(k)}$. We then have
\begin{align*}
    \sum_{k=1}^{K} w_k \log_2 \left(1+ \text{SINR}_k \right) &= \sum_{k=1}^{K} w'_{\pi(k)} \log_2 \left(1+ \text{SINR}'_{\pi(k)} \right) \\
    &\overset{(a)}{=} \sum_{k=1}^K w'_{k} \log_2 \left(1+ \text{SINR}'_{k} \right),
\end{align*}
where \emph{(a)} is due to the permutation invariant propery of the weighted sum as shown in Example \ref{exp:sum}. The following proposition shows that permutation invariant problems are ubiquitous in communication networks.

\begin{proposition}\label{fact:graph_pi} \cite{shen2020graph}
Any graph optimization problem can be formulated as in  \eqref{eq:opt_pro}.
\end{proposition}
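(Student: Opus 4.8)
The plan is to argue from the defining feature of a graph optimization problem: the vertices of a graph carry no intrinsic order, so any quantity that depends only on the graph---as opposed to a particular labeling of its vertices---is automatically invariant under relabeling. Concretely, I would first fix a formal model of a graph optimization problem. Let $G = ([n], E)$ be a graph on $n$ vertices, equip each vertex $i$ with a feature/decision block, and stack these blocks as the rows of a matrix variable $\bm{\Gamma} = [\bm{\gamma}_1, \ldots, \bm{\gamma}_n] \in \mathbb{C}^{n \times d}$, so that row $i$ collects all the data and optimization variables attached to vertex $i$ (direct-link coefficients, weights, powers, and so on, exactly as in the power-control example). Edge information is encoded by placing the relevant cross-link quantities into the feature blocks of the incident vertices, which keeps the entire problem instance inside the single array $\bm{\Gamma}$.

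Second, I would make precise the relabeling action. A permutation $\pi : [n] \rightarrow [n]$ produces an isomorphic copy of $G$ in which vertex $i$ is renamed $\pi(i)$; on the encoded data this is exactly the row permutation $\pi \star \bm{\Gamma}$ defined in Section~\ref{sec:pre}. The essential observation is that $G$ and its relabeled copy describe the same physical network---the same set of links with the same coefficients---merely with the vertices indexed differently. I would then verify the invariance identities $g(\bm{\Gamma}) = g(\pi \star \bm{\Gamma})$ and $Q(\bm{\Gamma}) = Q(\pi \star \bm{\Gamma})$: since $g$ and $Q$ are, by hypothesis, functions of the graph instance rather than of the arbitrary vertex ordering, evaluating them on $G$ and on its isomorphic relabeling must return identical values. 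This is precisely the invariance required by $\mathscr{P}$ in \eqref{eq:opt_pro}, so the reduction to the form \eqref{eq:opt_pro} follows at once. The weighted-sum objective of the interference-channel example already illustrates the mechanism: folding the weight $w_i$ into the same row as $x_i$ (Example~\ref{exp:sum}) is what turns an ordering-dependent expression into an invariant one.

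The step I expect to be the main obstacle is the first one---pinning down a sufficiently general yet faithful encoding. The claim is only as strong as the class of problems it covers, so the real work lies in arguing that node-level, edge-level, and global objectives and constraints can all be absorbed into the row-indexed array $\bm{\Gamma}$ without destroying the correspondence between row permutations $\pi \star \bm{\Gamma}$ and graph isomorphisms of $G$. In particular, the edge features require care, since a naive node-only encoding could lose the incidence structure; I would address this by recording each edge quantity symmetrically in the blocks of both endpoints so that any vertex relabeling permutes these records consistently.

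Once this encoding is fixed, the remainder is immediate: the invariance of $g$ and $Q$ is a direct consequence of their being graph (isomorphism-invariant) quantities, and no separate convexity or regularity hypothesis is needed. I would therefore present the argument as a modeling reduction---exhibiting the map from an arbitrary graph optimization instance to a matrix variable on which permutations act as relabelings---rather than as an analytic estimate, which matches the spirit of the cited result~\cite{shen2020graph}.
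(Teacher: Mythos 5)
Your high-level strategy---a modeling reduction in which permutation invariance follows because a graph optimization problem depends on the graph only through its isomorphism class---is exactly the spirit of the argument behind the cited result \cite{shen2020graph} (the paper itself offers no proof and defers entirely to that reference). However, your execution has a genuine flaw at precisely the step you flagged as the main obstacle: the encoding of edge quantities. If you flatten the cross-link data into node rows (say row $k$ holds $[w_k, p_k, \sigma_k, h_{k,1}, \ldots, h_{k,n}]$), then relabeling the vertices by $\pi$ is \emph{not} the row permutation $\pi \star \bm{\Gamma}$ of Section~\ref{sec:pre}. A relabeling must also reindex the cross-references \emph{inside} each row: the paper's own power-control example requires $h'_{k,i} = h_{\pi(k),\pi(i)}$, i.e., a two-sided action in which both the row index and the within-row position move. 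The pure row permutation $(\pi \star \bm{\Gamma})_{(\pi(k),:)} = \bm{\Gamma}_{(k,:)}$ leaves the within-row ordering fixed, so $\pi \star \bm{\Gamma}$ encodes a problem instance that is in general not a relabeling of the original one; for example, a function that reads the direct link $h_{k,k}$ from the $k$-th slot of row $k$ would read a cross link after the permutation. Consequently the required identity $g(\pi \star \bm{\Gamma}) = g(\bm{\Gamma})$ does not follow from isomorphism invariance under your encoding, and would typically be false. Your proposed repair---duplicating each edge record symmetrically in both endpoint rows---does not help, because the defect is the label-dependent indexing within a row, not which rows hold the data.

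The fix is to give up on forcing everything into a single node-indexed matrix acted on by row permutations. The formulation intended by \eqref{eq:opt_pro} and by \cite{shen2020graph} lets the problem data and variables comprise node-indexed blocks, on which $\pi$ acts by row permutation, \emph{and} edge-indexed blocks $\bm{A}$ with entries $\bm{A}_{(i,j)}$, on which $\pi$ acts two-sidedly via $(\pi \star \bm{A})_{(\pi(i),\pi(j))} = \bm{A}_{(i,j)}$. With this induced joint action, your relabeling argument goes through verbatim: the graph and its relabeled copy describe the same instance, so any objective $g$ and constraint $Q$ defined on the graph satisfy the invariance in \eqref{eq:opt_pro}. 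That one-line observation is the content of the proposition; the rest of your argument is sound once the group action is stated correctly.
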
 

A direct implication of Proposition \ref{fact:graph_pi} is that if the problem can be formulated as a graph optimization problem, then it enjoys the permutation invariance property. As communication networks can naturally be modeled as graphs, resource allocation in communication networks can be formulated as graph optimization problems. Examples include the $K$-user interference channel beamforming (modeled in Section II.C of \cite{shen2020graph}), joint beamforming and phase shifter design in IRS-assisted systems (modeled in Section IV.A of \cite{jiang2020learning}), and power control in multi-cell-multi-user systems (modeled in Section II.A of \cite{eisen2019optimal} and Section III of \cite{guo2021learning}). Additionally, inference on factor graphs is also a graph optimization problem. Thus, channel estimation or data detection also enjoys the permutation invariance property \cite{satorras2021neural}. Furthermore, graph structures are ubiquitous in signal processing systems, e.g., topological interference management, hybrid precoding, localization, and traffic prediction. As a result, permutation invariance also holds for these problems.

\subsection{Message Passing Graph Neural Networks}
To apply deep learning to solve Problem \eqref{eq:opt_pro}, our task is to identify a neural network that maps the problem parameters to the optimal solution. Thus, it is desirable that the adopted neural architecture respects the permutation invariance property of the problem. Message passing graph neural networks (MPGNNs), which are developed for learning on graphs, are a class of neural networks that exploit the permutation invariance property. Like other neural networks, they adopt a layer-wise structure. The update rule for the $k$-th layer at vertex $i$ in an MPGNN is
\begin{equation}\label{eq:mpgnn}
\begin{aligned}
 \bm{x}_i^{(k)} = \alpha^{(k)}\left(\bm{x}_i^{(k-1)}, \phi^{(k)} \left(\left\{\left[\bm{x}_j^{(k-1)},\bm{e}_{j,i}\right]: j \in \mathcal{N}(i)   \right\} \right)  \right),
\end{aligned}
\end{equation}
where $\bm{x}_i^{(0)}$ is the node feature of node $i$, $\bm{e}_{j,i}$ is the edge feature of the edge $(j,i)$, $\mathcal{N}(i)$ is the set of neighbors of node $i$, and $\bm{x}_i^{(k)}$ is the hidden state of node $i$ at the $k$-th layer. If the desired output is a single vector, then the output of the MPGNN is given by $\bm{o} =  \sum_{i=1}^n \bm{x}^{(K)}_i$, where $K$ is its maximal layer, and $n$ is the number of nodes in the graph. If the desired output is a vector for each node, the output of MPGNNs is given by $\bm{O} =  \left[\bm{x}^{(K)}_1, \cdots, \bm{x}^{(K)}_{n}\right]^T$.

\begin{figure*}[htb]
	\centering
	\subfigure[$5$ users.]{
		\begin{minipage}[t]{0.35\linewidth}
			\centering
			\includegraphics[width=1\linewidth]{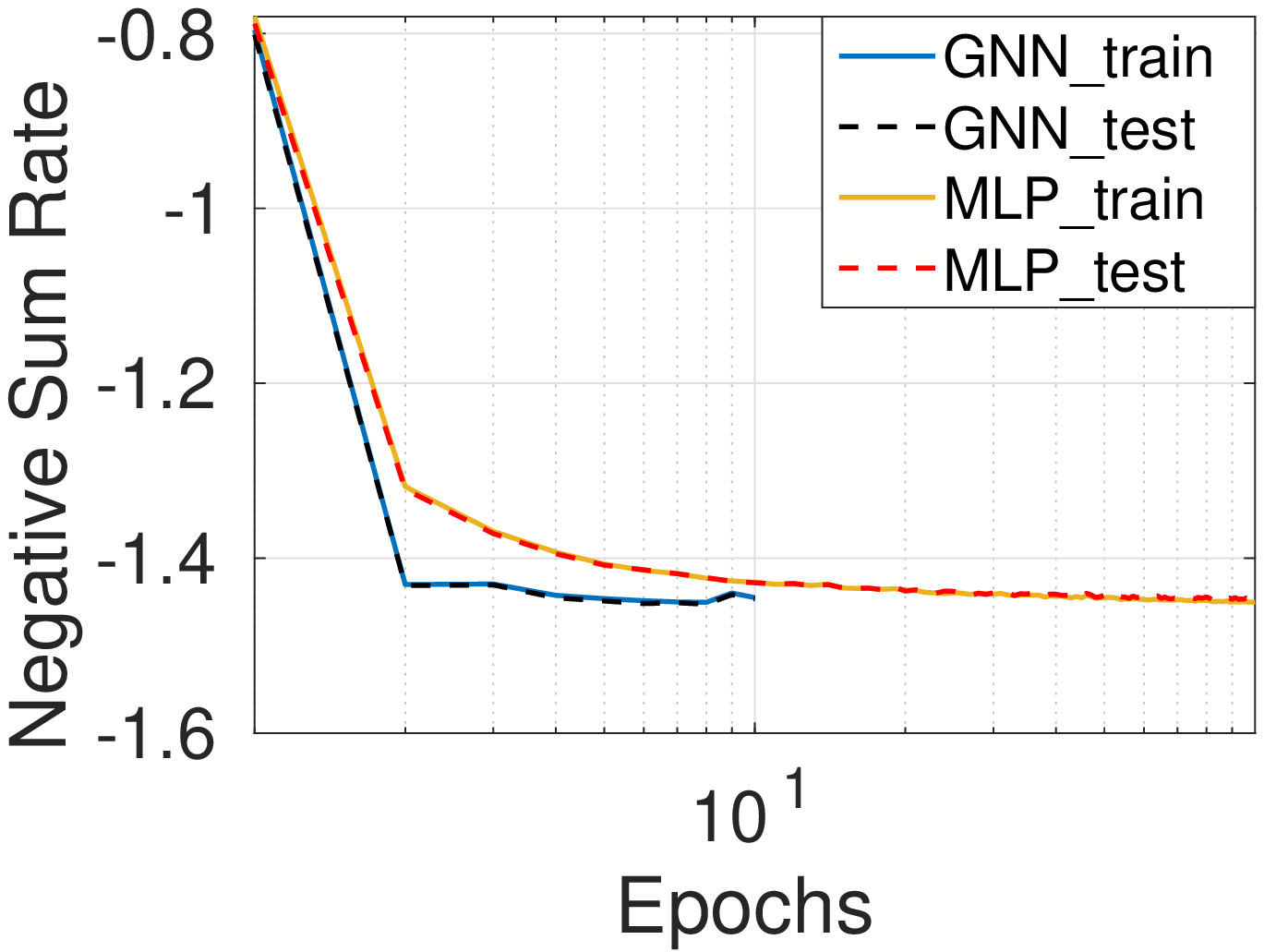}
			\label{fig:e2e}
			%\caption{}
		\end{minipage}%
	}%
	\subfigure[$20$ users.]{
		\begin{minipage}[t]{0.35\linewidth}
			\centering
			\includegraphics[width=1\linewidth]{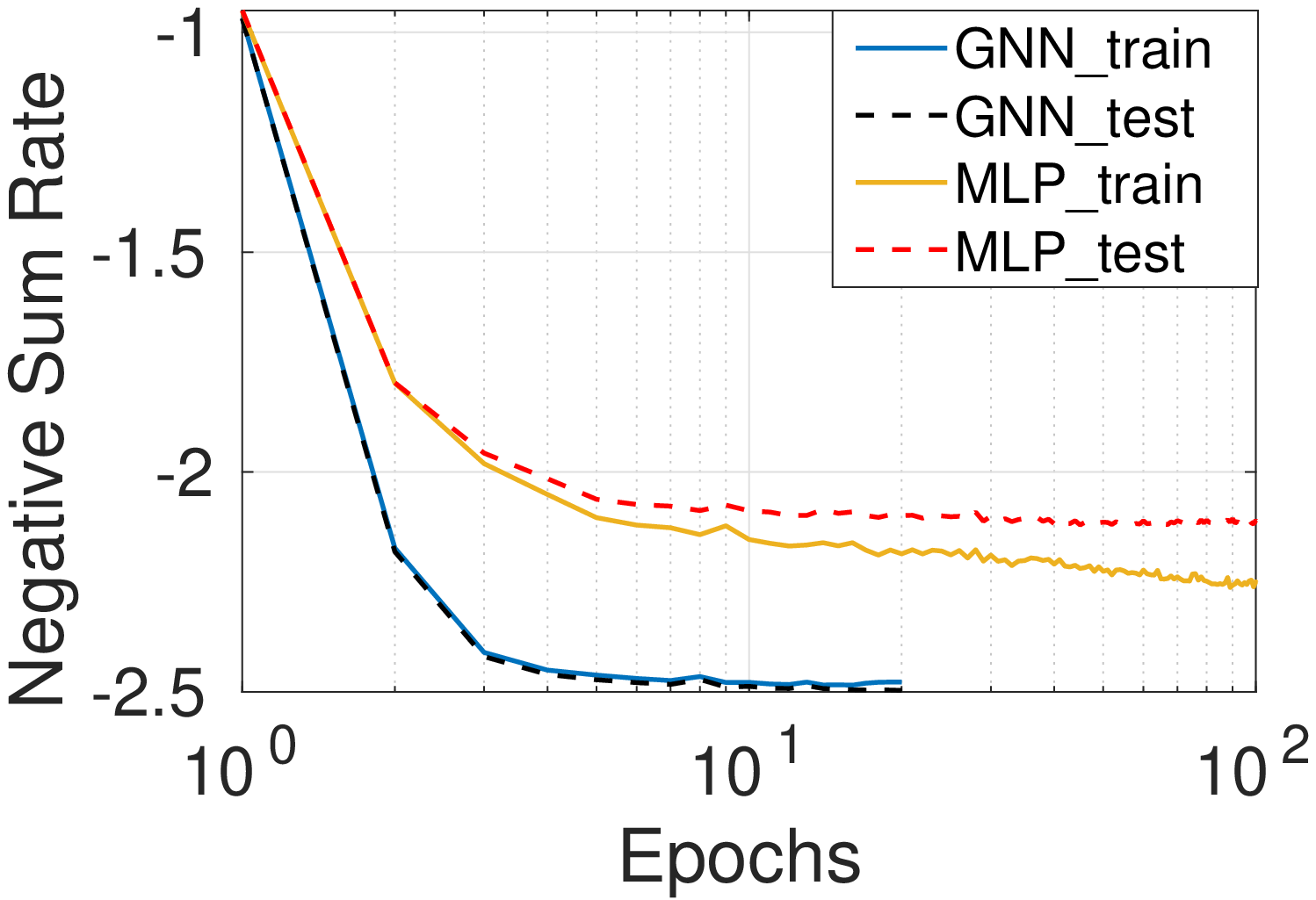}
			\label{fig:opl}
			%\caption{}
		\end{minipage}%
	}
	\centering
	\caption{Convergence and generalization of GNNs and MLPs for the $K$-user interference channel power control problem. The loss is the negative sum rate.}
	\label{fig:pc}
\end{figure*}
\begin{example} The wireless channel graph convolution network (WCGCN) proposed in \cite{shen2020graph} is a special case of MPGNNs designed for interference management in wireless networks. For the $K$-user interference channel power control problem, the update of the $k$-th node in the $j$-th layer is given as
\begin{equation}\label{eq:cgcnet}
\begin{aligned}
\bm{y}_k^{(j)} &=  \text{MAX}_{i \neq k}\left\{\text{MLP1} \left(p_k^{(j-1)},h_{i,k},h_{k,i}\right)\right\},\\
p_k^{(j)} &= \sigma\left( \text{MLP2}\left( \bm{y}_k^{(j)}, w_{k}, h_{k,k}\right)\right),
\end{aligned}
\end{equation}
where $\text{MLP1}$ and $\text{MLP2}$ are two different MLPs, and $\sigma(x) = \frac{1}{1+\exp(-x)}$ is the sigmoid function. Furthermore, $p_k^{(j)}$ is the output power at the $j$-th iteration, and $\bm{y}_k^{(j)}$ denotes the hidden state at the $j$-th iteration.
\end{example}

As most GNNs developed for communication networks are MPGNNs, for simplicity, we refer to MPGNNs as GNNs in the remainder of this paper.

\section{Main Results}
This section presents our main theoretical results. We first introduce the neural tangent kernel as a technical tool for our analysis. Then we present the main theorem, which is verified by simulations.

\subsection{Neural Tangent Kernel}
Neural tangent kernel (NTK) \cite{jacot2018neural} is a powerful tool that has been recently proposed to theoretically characterize the properties of neural networks \cite{du2019gradient,arora2019fine}. Let $\bm{u}(t) = (f(\bm{\theta}(t),\bm{x}_i))_{i \in [m]}$ be the network's output on $\bm{x}_i$ at time $t$, where $\bm{\theta}$ denotes the neural network parameters. We consider minimizing a loss function $\ell(\bm{\theta})$ by the gradient descent with an infinitesimally small learning rate.  The parameters evolve according to the following ODE
\begin{align*}
    \dot{\bm{\theta}}(t) = - \frac{\partial \ell(\bm{\theta}(t))}{\partial \bm{\theta}} = - \sum_{i=1}^m \frac{\partial \ell}{\partial f(\bm{\theta}(t), \bm{x}_i)} \frac{\partial f(\bm{\theta}(t), \bm{x}_i)}{\partial \bm{\theta}}.
\end{align*}
For the $i$-th training sample, the evolution of the neural network's output can be written as
\begin{align*}
    \dot{f}(\bm{\theta}(t), \bm{x}_i) = - \sum_{j = 1}^m  \frac{\partial \ell}{\partial \bm{u}} \left\langle \frac{\partial f(\bm{\theta}(t), \bm{x}_i)}{\partial \bm{\theta}}, \frac{\partial f(\bm{\theta}(t), \bm{x}_j)}{\partial \bm{\theta}}  \right\rangle.
\end{align*}
Thus, for the vector-form output $\bm{u}(t)$, we have 
\begin{align} \label{eq:ntk}
    \dot{\bm{u}}(t) = -\bm{H}(t) \cdot \frac{\partial \ell}{\partial \bm{u}}
\end{align}
where $[\bm{H}(t)]_{i,j} = \left\langle \frac{\partial f(\bm{\theta}(t),\bm{x}_i) }{\partial \bm{\theta} }, \frac{\partial f(\bm{\theta}(t),\bm{x}_j) }{\partial \bm{\theta} } \right\rangle$.

As the network width goes to infinity, the time-varying kernel $\bm{H}(t)$ approaches the time-invariant \emph{neural tangent kernel} $\bm{H}^* \in \mathbb{R}^{m \times m}$, where
\begin{align}\label{eq:H*}
    \bm{H}^*_{(i,j)} = \mathbb{E}_{\bm{\theta} \sim \mathcal{W}} \left\langle \frac{\partial f(\bm{\theta},\bm{x}_i) }{\partial \bm{\theta} }, \frac{\partial f(\bm{\theta},\bm{x}_j) }{\partial \bm{\theta} } \right\rangle, 
\end{align}
and $\mathcal{W}$ is a Gaussian distribution. 

It is shown in \eqref{eq:ntk} that if $\bm{H}(t)$ is a positive definite matrix, $\dot{\bm{u}} = 0$ if and only if $\frac{\partial \ell}{\partial \bm{u}} = 0$. Thus, if $\ell$ is convex, the global optimality is guaranteed at the training stage. Furthermore, as NTK bridges the neural network and kernel methods, the generalization error of the neural network can be analyzed by leveraging classic results on kernels. Different neural architectures correspond to different kernels, and thus we can analyze the convergence and generalization behaviors by studying the matrix $\bm{H}^*$ in \eqref{eq:H*}.

\subsection{Overview of Theoretical Results}
In this subsection, we study the convergence and generalization of MLPs and GNNs. We first define the generalization error:
\begin{align}\label{eq:gen_gap} \mathcal{E}_{\text{gen}} = \mathbb{E}_{\bm{x} \sim \mathcal{D}} [\ell(f(\bm{x},\bm{\theta})) - \ell(f(\bm{x},\bm{\theta}^*))],
\end{align}
where $f(\bm{x}_i, \bm{\theta})$ is the neural network's output for input $\bm{x}_i$ and parameter $\bm{\theta}$, $\ell(\cdot)$ is the loss function (performance metric), and $\mathcal{D}$ is the test distribution. We denote $\bm{\theta}$ as the obtained neural network's weights, and $\bm{\theta}^*$ as the optimal weights for the test distribution. The overall theoretical results are given in Theorem \ref{thm:overall}.

\begin{theorem}\label{thm:overall} (Informal) Assuming we train MLPs and GNNs in the NTK regime and the loss function is convex with respect to the output of the neural networks, gradient descent finds a global minimizer with a $O(1/t)$ rate, where $t$ is the number of epochs. Furthermore, suppose we learn a class of infinite-order permutation invariant functions $y = f(\bm{x}_1, \cdots, \bm{x}_n)$ with two-layer neural networks, then the convergence speed of GNNs is $O(n\log(n))$ times faster than MLPs at the training stage. At the test stage, the generalization error of MLPs is $O(n)$ times larger than GNNs.
\end{theorem}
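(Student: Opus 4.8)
The plan is to reduce every claim in the theorem to the spectrum of the frozen NTK matrix $\bm{H}^*$ and the alignment of the label vector $\bm{y}$ with its eigenbasis, and then to compute and compare these spectra for the two-layer MLP and GNN.

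\emph{Global $O(1/t)$ convergence.} First I would work directly with the linearized dynamics \eqref{eq:ntk}. In the infinite-width regime the kernel is frozen at $\bm{H}^*$, so $\dot{\bm{u}}(t)=-\bm{H}^*\,\partial\ell/\partial\bm{u}$ is a preconditioned gradient flow on the convex objective $\ell(\bm{u})$. With the Lyapunov function $V(t)=\tfrac12(\bm{u}(t)-\bm{u}^*)^\top(\bm{H}^*)^{-1}(\bm{u}(t)-\bm{u}^*)$, positive definiteness of $\bm{H}^*$ and convexity of $\ell$ give $\dot V\le-(\ell(\bm{u})-\ell^*)$; since $\ell(\bm{u}(t))$ is nonincreasing this integrates to $\ell(\bm{u}(t))-\ell^*\le V(0)/t$, i.e.\ the $O(1/t)$ rate. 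The only delicate point is justifying the freezing of the kernel, which I would import from the cited NTK results \cite{jacot2018neural,du2019gradient,arora2019fine}.

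\emph{From dynamics to alignment.} Next I would diagonalize $\bm{H}^*=\sum_i\lambda_i\bm{v}_i\bm{v}_i^\top$ and expand $\bm{y}=\sum_i(\bm{v}_i^\top\bm{y})\bm{v}_i$. Along the $i$-th eigendirection the residual contracts at rate $\lambda_i$, so the epochs needed to drive the training loss below a tolerance are controlled by the smallest eigenvalue carrying appreciable label mass, whereas the generalization error is governed by the RKHS norm of the target, $\mathcal{E}_{\text{gen}}=O\!\left(\sqrt{\bm{y}^\top(\bm{H}^*)^{-1}\bm{y}/m}\right)$. This is exactly the alignment picture of Theorems \ref{thm:ntk_conv} and \ref{thm:gen_all}, so the separation between the two architectures reduces to a single question: for an infinite-order permutation-invariant target, how do the eigenvalues of the GNN and MLP kernels differ on the directions where $\bm{y}$ concentrates?

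\emph{Spectral comparison (the main obstacle).} The heart of the proof, and the step I expect to be hardest, is obtaining both kernels in closed enough form to compare them. For the two-layer MLP I would use the standard arc-cosine/Hermite expansion of $\bm{H}^*$. For the GNN I would exploit that sum-pooling makes its kernel permutation invariant in the input, $k(\pi\star\bm{x},\bm{x}')=k(\bm{x},\bm{x}')$; passing to the Mercer decomposition over the input distribution, the integral operator commutes with the symmetric-group action on the input coordinates, so its eigenspaces split into isotypic components and an infinite-order permutation-invariant label is supported on the trivial (invariant) block. The MLP kernel, being only orthogonally invariant on the flattened input, does not concentrate the same target and spreads it over eigenfunctions with far smaller eigenvalues. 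The key computation is that pooling the $n$ shared-weight node outputs injects a factor of order $n$ into the invariant eigenvalue of the GNN kernel relative to the matching MLP eigenvalue, while the residual $O(\log n)$ gap arises from summing the contributions of all interaction orders (a harmonic-type sum) that the infinite-order hypothesis forces to be present in the MLP spectrum. Converting these into the stated $\Theta(n\log n)$ ratio of effective eigenvalues for convergence, and the $\Theta(n)$ ratio of $\bm{y}^\top(\bm{H}^*)^{-1}\bm{y}$ for generalization, is where the real labor lies; I would carry it out by sandwiching the invariant-block eigenvalue of each kernel between matching upper and lower bounds and tracking the label projection through the pooling map. These last two estimates are precisely Theorems \ref{thm:conv} and \ref{thm:gen}.
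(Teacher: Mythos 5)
Your proposal takes essentially the same route as the paper: the paper proves this informal theorem purely by composing Theorem \ref{thm:ntk_conv} (the $O(1/t)$ rate under convex loss in the frozen-kernel regime), Theorem \ref{thm:conv} (the spectral convergence gap), and Theorem \ref{thm:gen} (the generalization gap), which is exactly the three-way decomposition you use and explicitly map onto those theorems. Your extra content --- the Lyapunov-function derivation of the $O(1/t)$ rate and the Hermite/Mercer plan for the kernel comparison --- goes beyond what the paper itself writes (it defers those ingredients to the cited NTK literature), but it is consistent with the paper's argument.
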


\begin{proof}
Theorem \ref{thm:ntk_conv} in Section \ref{sec:conv} shows that gradient descent converges to the global minimizer with a $O(1/t)$ rate. The gap in convergence is shown in Theorem \ref{thm:conv} and the gap in generalization is shown in Theorem \ref{thm:gen}. 
\end{proof}

To demonstrate the power and significance of Theorem \ref{thm:overall}, we apply GNNs and MLPs to the $K$-user interference channel power control problem, where the system setting follows that of Section V.A in \cite{shen2020graph}. The results are shown in Fig. \ref{fig:pc}. With $5$ users, both MLPs and GNNs achieve a good training loss and a similar test error. However, with $20$ users, MLPs have difficulty in training while GNNs converge quickly. Furthermore, there is a large performance gap between MLPs and GNNs with $20$ users at the test stage.

\section{Detailed Analysis}
This section presents detailed derivations of Theorem \ref{thm:overall}.
\subsection{Convergence}\label{sec:conv}
\begin{figure*}[htbp]
	\centering
	\subfigure[Learning a permutation invariant function on a $1$-node graph.]{
		\begin{minipage}[t]{0.22\linewidth}
			\centering
			\includegraphics[width=1\linewidth]{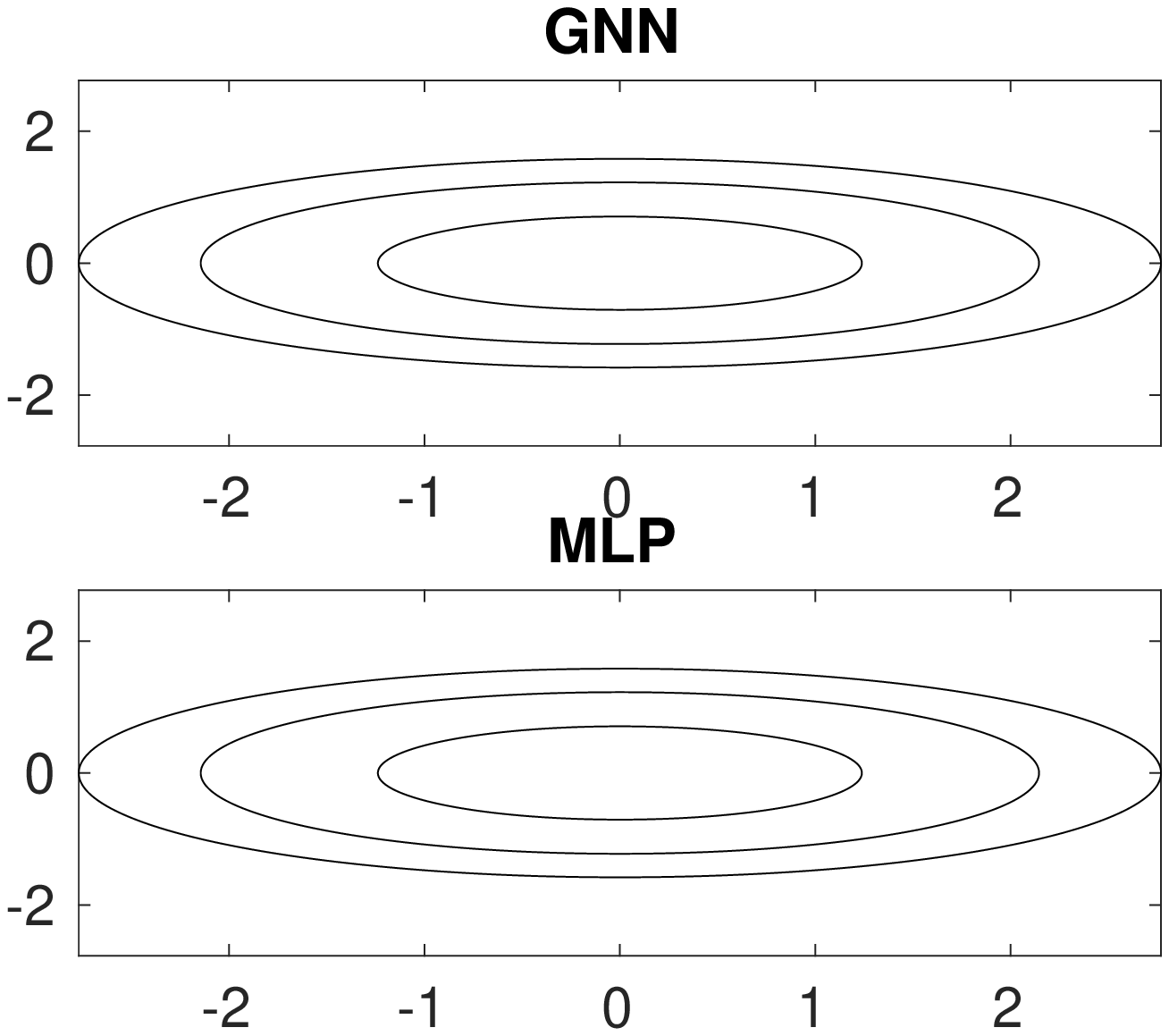}
		\end{minipage}%
	}%
	\hspace{.1in}
	\subfigure[Learning a permutation invariant function on a $5$-node graph.]{
		\begin{minipage}[t]{0.22\linewidth}
			\centering
			\includegraphics[width=1\linewidth]{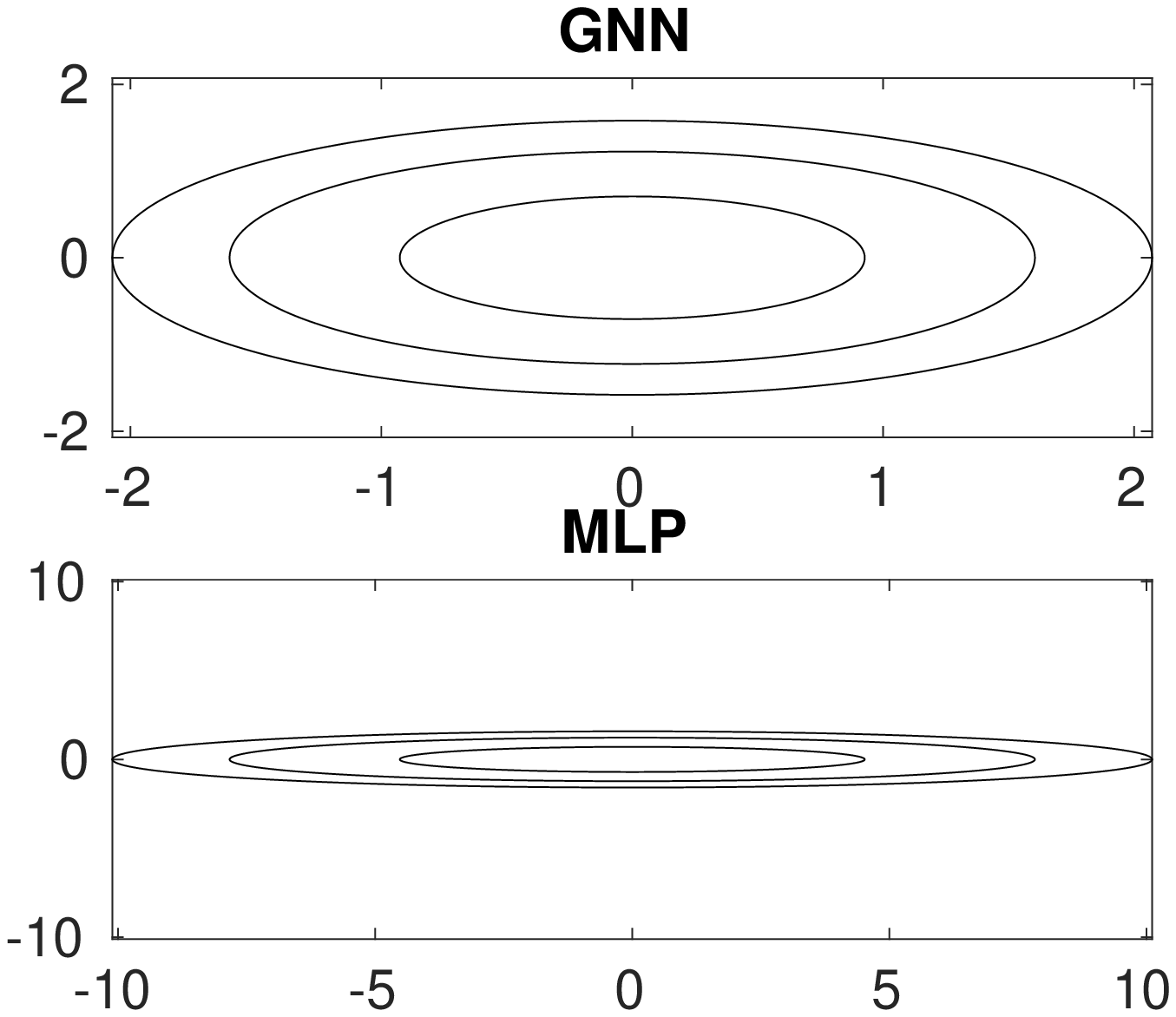}
		\end{minipage}%
	}%
	\hspace{.1in}
	\subfigure[Learning a permutation invariant function on a $20$-node graph.]{
		\begin{minipage}[t]{0.22\linewidth}
			\centering
			\includegraphics[width=1\linewidth]{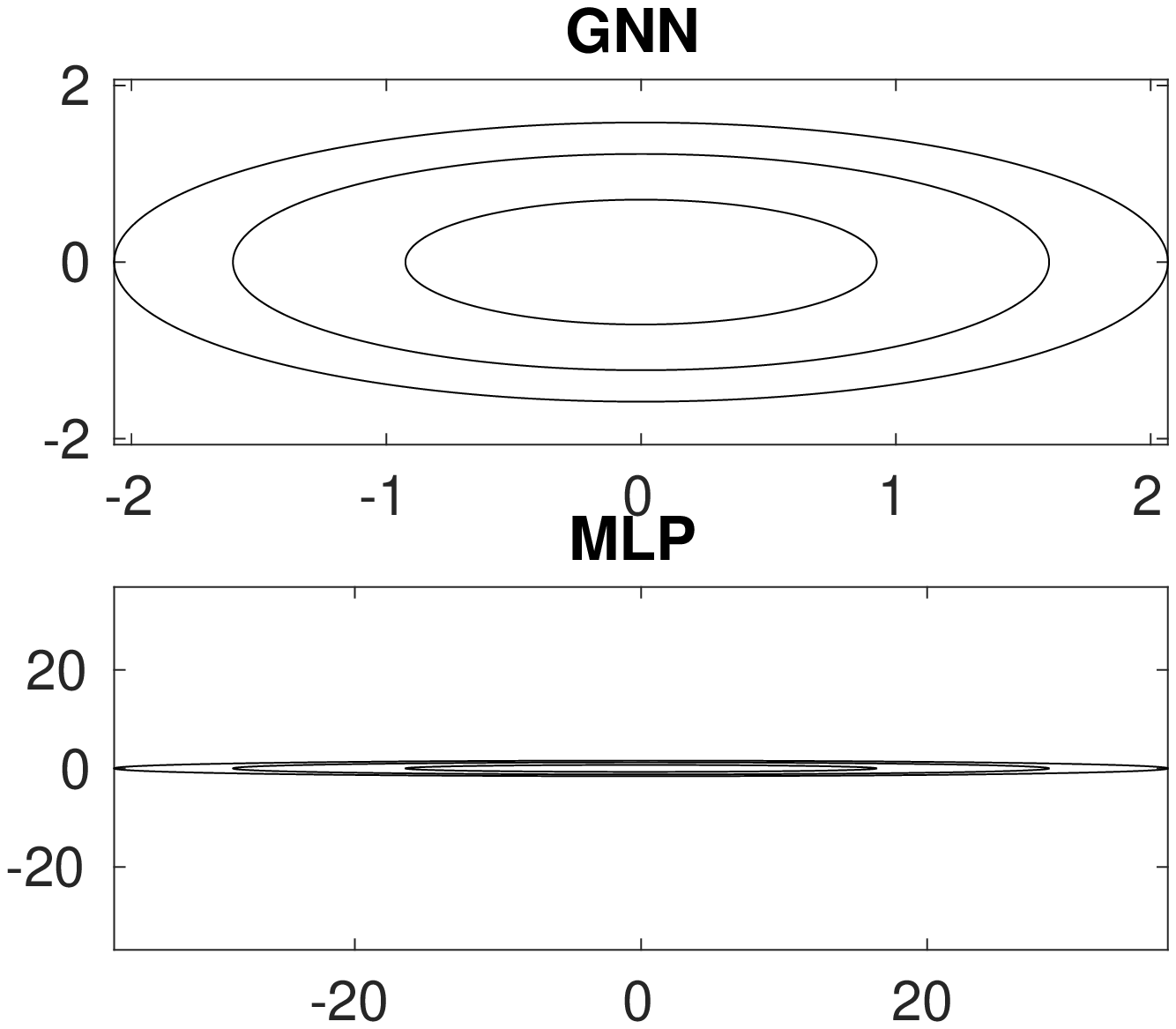}
		\end{minipage}%
	}%
	\hspace{.1in}
	\subfigure[Learning an non-permutation invariant function on a $2$-node graph.]{
		\begin{minipage}[t]{0.22\linewidth}
			\centering
			\includegraphics[width=1\linewidth]{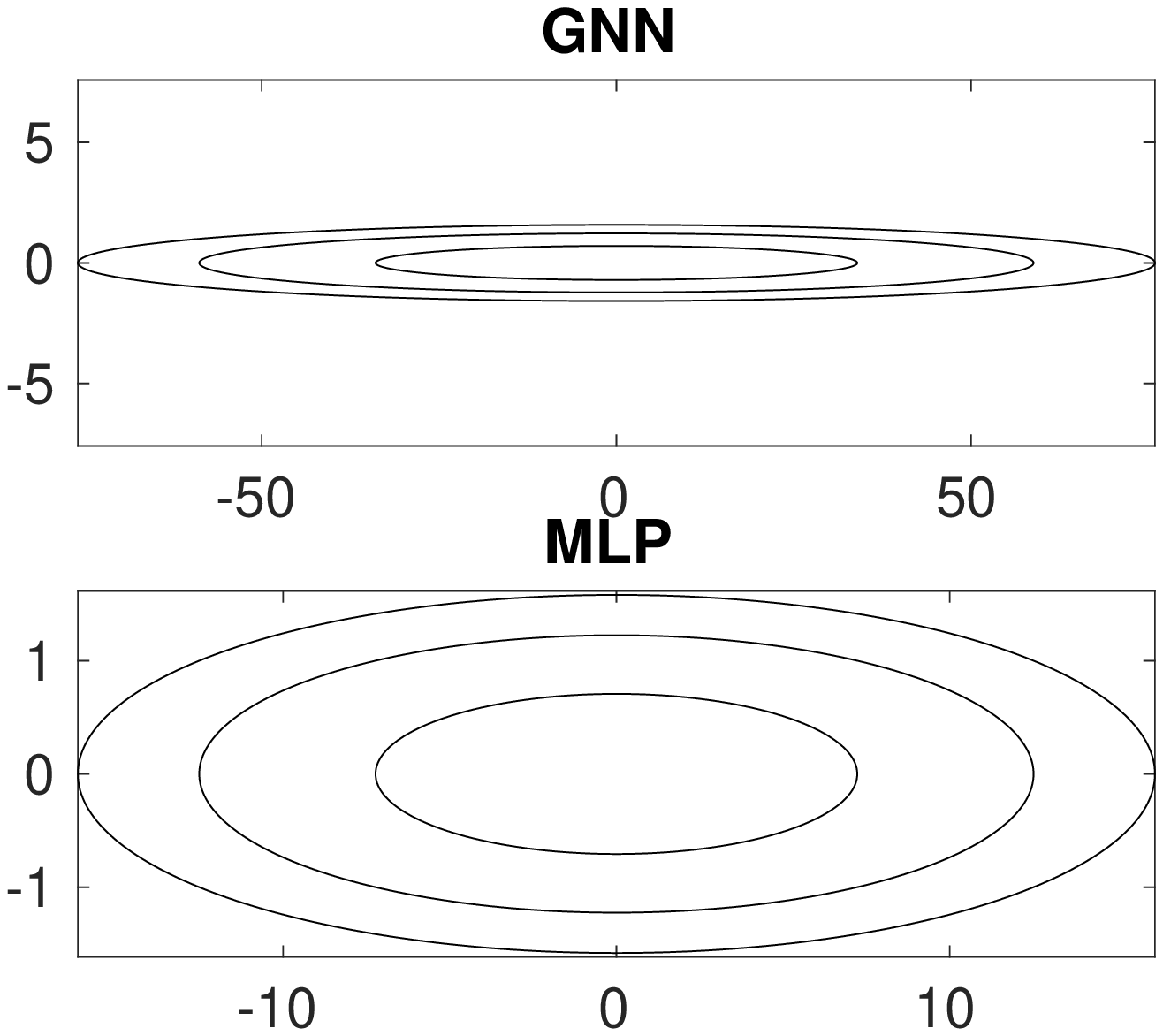}
		\end{minipage}%
	}%
	\caption{An illustration of the optimization landscape for two-layer infinitely wide MLPs \cite{arora2019fine} and GNNs \cite{du2019graph}. The length of the long axis represents the conditional number. A more isotropic plot means better convergence. The plots are generated from $300$ i.i.d. Gaussian samples. In (a)(b)(c), the target function is a linear permutation invariant function while in (d) the target function is non-permutation invariant.}
	\label{fig:landscape}
\end{figure*}
In this subsection, we study the convergence of GNNs and MLPs. We first discuss the convergence rate of neural networks under convex loss functions.

\begin{theorem}\label{thm:ntk_conv} (Global convergence of neural networks with convex loss functions)
Assume $\lambda_{\min}(\bm{H}^*) > 0$, for a convex and differentiable loss function, the convergence rate is given by $\ell(\bm{u}(t)) - \ell(\bm{u}^*) \leq \frac{\|\bm{u}(0) - \bm{u}^*\|_2^2}{2\kappa t}$,
where $\kappa = \max_{t} \frac{\|\bm{H}^* \bm{x}(t)\|_2 \|\bm{x}(t)\|_2 }{\bm{x}(t) \bm{H}^* \bm{x}(t)} \leq \frac{\lambda_{\max}(\bm{H}^* )}{\lambda_{\min}(\bm{H}^* )}$, and $\bm{x}(t) = \frac{\partial \ell}{\partial \bm{u}}$. For the squared loss function, we have $\bm{y} - \bm{u}(t) = \exp(- \bm{H}^* t) \bm{y}$.
\end{theorem}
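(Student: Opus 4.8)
The plan is to analyze the training dynamics in the infinite-width limit, where the kernel in \eqref{eq:ntk} is frozen at $\bm{H}^*$, so the evolution of the outputs becomes the autonomous, time-invariant flow $\dot{\bm{u}}(t) = -\bm{H}^* \bm{x}(t)$ with $\bm{x}(t) = \partial \ell / \partial \bm{u}$. Since $\lambda_{\min}(\bm{H}^*) > 0$, the kernel is positive definite, and this is precisely a \emph{preconditioned} gradient flow on the convex loss $\ell$ viewed as a function of the outputs $\bm{u}$. I would handle the squared-loss identity and the general convex rate separately.

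For the squared loss, substitute $\ell(\bm{u}) = \tfrac{1}{2}\|\bm{y} - \bm{u}\|_2^2$, so $\bm{x}(t) = \bm{u}(t) - \bm{y}$ and the flow reduces to the \emph{linear} ODE $\dot{\bm{u}}(t) = -\bm{H}^*(\bm{u}(t) - \bm{y})$. Introducing the residual $\bm{r}(t) = \bm{y} - \bm{u}(t)$ gives $\dot{\bm{r}}(t) = -\bm{H}^* \bm{r}(t)$, whose unique solution is the matrix exponential $\bm{r}(t) = \exp(-\bm{H}^* t)\,\bm{r}(0)$; with the standard zero initialization $\bm{u}(0)=\bm{0}$ this yields $\bm{y} - \bm{u}(t) = \exp(-\bm{H}^* t)\,\bm{y}$, the claimed identity. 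This part is routine once the linearity is exposed.

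For a general convex, differentiable loss I would run a Lyapunov argument adapted to the preconditioner. Let $\delta(t) = \ell(\bm{u}(t)) - \ell(\bm{u}^*)$. Monotone decay of the loss is immediate: $\dot{\delta}(t) = \langle \bm{x}(t), \dot{\bm{u}}(t)\rangle = -\bm{x}(t)^\top \bm{H}^* \bm{x}(t) \le 0$ because $\bm{H}^* \succ 0$. The crucial idea is to measure progress toward $\bm{u}^*$ in the $(\bm{H}^*)^{-1}$ geometry, via $V(t) = \tfrac{1}{2}\|\bm{u}(t) - \bm{u}^*\|_{(\bm{H}^*)^{-1}}^2$. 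Since $\bm{x}(t) = -(\bm{H}^*)^{-1}\dot{\bm{u}}(t)$, convexity gives $\delta(t) \le \langle \bm{x}(t), \bm{u}(t) - \bm{u}^*\rangle = -\langle (\bm{H}^*)^{-1}\dot{\bm{u}}(t), \bm{u}(t) - \bm{u}^*\rangle = -\dot{V}(t)$. Integrating on $[0,t]$ and using that $\delta$ is non-increasing yields $t\,\delta(t) \le \int_0^t \delta(s)\,ds \le V(0)$, i.e. $\delta(t) \le \|\bm{u}(0) - \bm{u}^*\|_{(\bm{H}^*)^{-1}}^2 / (2t)$, which is the $O(1/t)$ rate. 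To land on the stated Euclidean form I would convert the weighted norm, so that the ratio between $\|\cdot\|_{(\bm{H}^*)^{-1}}^2$ and $\|\cdot\|_2^2$ along the trajectory is carried by the conditioning factor $\kappa$; the bound $\kappa \le \lambda_{\max}(\bm{H}^*)/\lambda_{\min}(\bm{H}^*)$ then follows directly from Cauchy--Schwarz together with $\|\bm{H}^* \bm{x}\|_2 \le \lambda_{\max}(\bm{H}^*)\|\bm{x}\|_2$ and $\bm{x}^\top \bm{H}^* \bm{x} \ge \lambda_{\min}(\bm{H}^*)\|\bm{x}\|_2^2$.

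The main obstacle is precisely the preconditioner $\bm{H}^*$: the naive Euclidean Lyapunov function $\tfrac{1}{2}\|\bm{u} - \bm{u}^*\|_2^2$ need not decrease monotonically under $\dot{\bm{u}} = -\bm{H}^* \bm{x}$, so the textbook gradient-flow proof does not apply verbatim. The clean fix is to pass to the $(\bm{H}^*)^{-1}$ geometry, equivalently to change variables $\bm{v} = (\bm{H}^*)^{-1/2}\bm{u}$, under which the flow becomes \emph{vanilla} gradient flow $\dot{\bm{v}} = -\nabla \tilde{\ell}(\bm{v})$ on the convex function $\tilde{\ell}(\bm{v}) = \ell((\bm{H}^*)^{1/2}\bm{v})$, reducing the problem to the standard rate and transporting $\|\bm{v}(0)-\bm{v}^*\|_2^2 = \|\bm{u}(0)-\bm{u}^*\|_{(\bm{H}^*)^{-1}}^2$ back to the outputs. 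The remaining delicate points are verifying the $\kappa$ estimate and justifying that $\bm{H}(t)$ may be replaced by the constant, strictly positive definite $\bm{H}^*$ in the infinite-width regime, which is what makes the loss convex in $\bm{u}$ in the first place.
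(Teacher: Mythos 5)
Your two core computations are both correct, and since the paper omits any proof of this theorem (Theorem \ref{thm:overall}'s proof merely points to it), your route is the natural one to judge on its own terms. The squared-loss identity is fine: freezing the kernel at $\bm{H}^*$ turns \eqref{eq:ntk} into the linear ODE $\dot{\bm{r}}(t) = -\bm{H}^*\bm{r}(t)$ for the residual $\bm{r} = \bm{y}-\bm{u}$, and zero initialization gives $\bm{y}-\bm{u}(t) = \exp(-\bm{H}^* t)\bm{y}$. The preconditioned Lyapunov argument is also sound: with $V(t) = \tfrac{1}{2}\|\bm{u}(t)-\bm{u}^*\|_{(\bm{H}^*)^{-1}}^2$, convexity gives $\delta(t) \le -\dot{V}(t)$, monotonicity of $\delta$ plus integration gives $\delta(t) \le \tfrac{1}{2t}\|\bm{u}(0)-\bm{u}^*\|_{(\bm{H}^*)^{-1}}^2$, and your change of variables $\bm{v} = (\bm{H}^*)^{-1/2}\bm{u}$ is a clean way to package this.

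The genuine gap is exactly the step you defer to the end: converting the weighted-norm bound into the stated Euclidean form $\|\bm{u}(0)-\bm{u}^*\|_2^2/(2\kappa t)$. This conversion cannot be carried out, because the honest conversion $\|\bm{v}\|_{(\bm{H}^*)^{-1}}^2 \le \|\bm{v}\|_2^2/\lambda_{\min}(\bm{H}^*)$ puts $\lambda_{\min}(\bm{H}^*)$ (a quantity with eigenvalue units) in the denominator, whereas the theorem's $\kappa$ is a dimensionless ratio in $[1, \lambda_{\max}/\lambda_{\min}]$; no inequality between $1/\lambda_{\min}$ and $1/\kappa$ holds in general. Indeed, the stated bound is not even scale-covariant: replacing $\bm{H}^*$ by $c\bm{H}^*$ compresses the trajectory in time but leaves $\kappa$ and the right-hand side unchanged, so it cannot hold for all positive definite kernels. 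Concretely, take the one-dimensional case $\bm{H}^* = h$ with squared loss and $\bm{u}(0)=0$: then $\bm{u}^* = y$, $\kappa = 1$, and at $t = 1/(2h)$ the left side is $\tfrac{1}{2}e^{-1}y^2$ while the claimed right side is $hy^2$, which is violated whenever $h < 1/(2e)$ --- precisely the regime of small NTK eigenvalues that the theorem is meant to address. So the inequality you are aiming at is unprovable as written; what your argument actually establishes (and what should be stated) is the bound with $\lambda_{\min}(\bm{H}^*)$, or equivalently the $(\bm{H}^*)^{-1}$-norm, in place of $\kappa$, with the sharper alignment-dependent rate $\delta(t) = \tfrac{1}{2}\sum_i e^{-2\lambda_i t}(\bm{v}_i^T\bm{y})^2$ available for the squared loss via the eigendecomposition of $\bm{H}^*$. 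You should flag this discrepancy rather than promise a conversion that cannot exist.
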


In the context of communication networks, the universal approximation theorem of neural networks has been adopted to justify the application of deep learning, e.g., for resource management \cite{sun2018learning,eisen2019optimal}, and MIMO detection \cite{qiang2019deep}, but nothing has been said about whether the optimal weights of the neural network can be found via training. Theorem \ref{thm:ntk_conv} makes a further step by showing that the gradient descent algorithm is able to find such a neural network, given that the loss function is convex. For non-convex loss functions, there exist bad stationary points as shown in \cite{song2021supervise}.

In Theorem \ref{thm:ntk_conv}, the convergence heavily depends on the alignment between the eigenvectors of the NTK matrix ($\bm{H}^*$) and the target vector ($\bm{y}$ or $\bm{x}(t)$). As $\bm{H}^*$ is a structured random matrix, it is possible to have a fine-grained analysis on the convergence by studying its eigenvectors, which reveals the gap between GNNs and MLPs.

\begin{theorem} \label{thm:conv} (GNNs converge faster than MLPs)
    Suppose we train two-layer neural networks in the NTK regime with squared loss function, and the target function is $\bm{u}^*(\bm{x}_1, \cdots \bm{x}_n) = \sum_{i = 1}^n (\bm{\beta}^T \bm{x}_i)^{p}$. Then for GNNs, we have $\|\bm{u}(t) - \bm{u}^*\|_2 \leq \exp\left(- c_{p, \sigma}\left(\sum_{i=1}^n \lambda_i\right)t\right)  \|\bm{\beta}\|_2^{p}$. For MLPs, we have $\|\bm{u}(t) - \bm{u}^*\|_2 \leq n \exp\left(- c_{p, \sigma} \lambda_{\min}t\right)   \|\bm{\beta}\|_2^{p}$, where $c_{p,\sigma}$ is a constant related to $p$ and the activation function $\sigma$. In addition, $\lambda_1, \cdots, \lambda_n$ are non-negative constants regarding the input data, and $\lambda_{\min} = \min_{i}(\lambda_1, \cdots, \lambda_n)$.
\end{theorem}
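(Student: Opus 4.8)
The plan is to reduce both convergence rates to a spectral-alignment computation and then exploit the contrasting structure of the two kernels. By the closed-form solution for the squared loss in Theorem~\ref{thm:ntk_conv}, under the standard symmetric initialization giving $\bm{u}(0)=\bm{0}$, the residual is $\bm{u}(t)-\bm{u}^* = -\exp(-\bm{H}^* t)\,\bm{u}^*$. Diagonalizing $\bm{H}^*=\sum_k \mu_k \bm{v}_k\bm{v}_k^T$ gives $\|\bm{u}(t)-\bm{u}^*\|_2^2 = \sum_k e^{-2\mu_k t}\langle \bm{v}_k,\bm{u}^*\rangle^2$, so the rate is governed entirely by the eigenvalues $\mu_k$ whose eigenvectors $\bm{v}_k$ carry the projection of the target vector $\bm{u}^*=(u^*(\bm{X}^{(a)}))_a$. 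Everything therefore reduces to (i) writing down the architecture-specific NTK and (ii) locating where the energy of $\bm{u}^*$ sits in its spectrum.

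First I would compute the GNN kernel. Since a two-layer GNN outputs $\sum_{i}g_{\bm\theta}(\bm{x}_i)$ with a shared node map $g_{\bm\theta}$, its NTK factorizes as $[\bm{H}^*_{\mathrm{GNN}}]_{ab}=\sum_{i,j}K(\bm{x}^{(a)}_i,\bm{x}^{(b)}_j)$, where $K$ is the single-node two-layer NTK. Expanding $\sigma$, and hence $K$, in the Hermite basis with respect to the Gaussian input, the degree-$p$ term is the one resonant with the target monomial $h(\bm{x})=(\bm\beta^T\bm{x})^p$, and its strength is the Hermite coefficient $c_{p,\sigma}$. The double sum over nodes makes these node contributions add \emph{coherently}, so $\bm{u}^*=\sum_i h(\bm{x}_i)$ aligns with a single top eigenvector whose eigenvalue is proportional to $\sum_{i=1}^n\lambda_i$; the scalar $\|\bm\beta\|_2^{p}$ is the Hermite norm of $h$. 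Substituting this single pair $(\mu,\langle\bm v,\bm u^*\rangle)$ into the residual identity yields $\|\bm{u}(t)-\bm{u}^*\|_2\le \exp(-c_{p,\sigma}(\sum_i\lambda_i)t)\,\|\bm\beta\|_2^{p}$.

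For the MLP, the input graph is flattened into a vector in $\R^{nd}$, so $[\bm{H}^*_{\mathrm{MLP}}]_{ab}=K_{\mathrm{MLP}}(\bm{X}^{(a)},\bm{X}^{(b)})$ depends only on the global inner product $\sum_i\langle\bm{x}^{(a)}_i,\bm{x}^{(b)}_i\rangle$ and the norms, with no mechanism to combine node contributions. The target must then be handled node by node: I would split $\bm{u}^*=\sum_{i=1}^n\bm{u}^*_i$ with $\bm{u}^*_i=(h(\bm{x}^{(a)}_i))_a$ and apply the triangle inequality to $\exp(-\bm{H}^*_{\mathrm{MLP}}t)\bm{u}^*$. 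Each piece $\bm{u}^*_i$ projects onto the MLP spectrum through its own degree-$p$ eigenvalue $\lambda_i$, and bounding all $n$ pieces by the slowest mode $\lambda_{\min}=\min_i\lambda_i$ produces $n\exp(-c_{p,\sigma}\lambda_{\min}t)\,\|\bm\beta\|_2^{p}$. The leading factor $n$ is precisely the cost of the MLP's inability to aggregate the $n$ node terms coherently, and the exponent degrades from $\sum_i\lambda_i$ to $\lambda_{\min}$ for the same reason.

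The main obstacle will be rigorously establishing these two alignment claims via the Hermite (equivalently spherical-harmonic) expansion of the kernels: showing that the degree-$p$ projection of $\bm{u}^*$ concentrates on one GNN eigenvector with eigenvalue $\Theta(\sum_i\lambda_i)$, while for the MLP it is distributed over $n$ modes each of order $\lambda_{\min}$, and identifying the $\lambda_i$ as the relevant per-node spectral quantities of the input data. This also requires controlling the off-diagonal terms $K(\bm{x}_i,\bm{x}_j)$ with $i\ne j$ in the GNN kernel, which must be shown to be subdominant under the input assumptions so that the coherent-sum picture survives. Secondary technical points---symmetric initialization to enforce $\bm{u}(0)=\bm{0}$, and the $O(1/\sqrt{m})$ deviation of the finite-width kernel from $\bm{H}^*$---are standard in the NTK regime and can be folded into the constant $c_{p,\sigma}$.
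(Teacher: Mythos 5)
Your skeleton matches the route the paper itself relies on (the paper gives no self-contained proof of this theorem; it defers to the NTK analyses of Arora et al.\ and Du et al.\ that it cites): the closed-form residual $\bm{u}(t)-\bm{u}^*=-\exp(-\bm{H}^*t)\,\bm{u}^*$ from Theorem~\ref{thm:ntk_conv}, eigendecomposition of $\bm{H}^*$, the GNN kernel written as the double sum $\sum_{i,j}K(\bm{x}_i^{(a)},\bm{x}_j^{(b)})$, a Hermite expansion producing $c_{p,\sigma}$, and a node-by-node split with a triangle inequality and a factor $n$ for the MLP. However, two of your steps have genuine gaps. First, the claim that for the GNN the target vector concentrates on a \emph{single} top eigenvector with eigenvalue $\Theta\left(\sum_i\lambda_i\right)$ is not established (you flag it as the main obstacle yourself), and it is stronger than what is true: even in the population limit with rotation-invariant data, $(\bm{\beta}^T\bm{x})^p$ spreads over several eigen-levels (degrees $p,p-2,\dots$), and finite-sample eigenvectors of $\bm{H}^*$ do not coincide with population eigenfunctions. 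What is provable is only that the target's energy lies in the part of the spectrum above $c_{p,\sigma}\sum_i\lambda_i$, and this already requires controlling the cross-terms $K(\bm{x}_i,\bm{x}_j)$, $i\neq j$, which you defer.

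Second, and more decisively, the right-hand side $\|\bm{\beta}\|_2^{p}$ cannot be obtained by declaring it ``the Hermite norm of $h$.'' The projections $\langle\bm{v}_k,\bm{u}^*\rangle$ are finite-sample quantities: $\|\bm{u}^*\|_2$ scales like $\sqrt{m}$ (times $n$ for the GNN target), so bounding the residual by $e^{-\mu t}\|\bm{u}^*\|_2$ does not yield the stated bound. The missing bridge---and the key lemma in the works this theorem rests on---is the finite-sample RKHS-norm bound $\sqrt{(\bm{u}^*)^T(\bm{H}^*)^{-1}\bm{u}^*}\le c_{p,\sigma}\|\bm{\beta}\|_2^{p}$ for the GNN kernel, versus $\le n\,c_{p,\sigma}\|\bm{\beta}\|_2^{p}$ for the MLP kernel, where the factor $n$ arises from the triangle inequality applied \emph{in the RKHS norm} to the $n$ monomials (the MLP kernel has no mechanism to represent their sum at unit cost, whereas the GNN's sum-aggregation does). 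One then writes $e^{-2\mu_k t}\langle\bm{v}_k,\bm{u}^*\rangle^2=\left(e^{-2\mu_k t}\mu_k\right)\cdot\mu_k^{-1}\langle\bm{v}_k,\bm{u}^*\rangle^2$ and bounds $e^{-2\mu_k t}\mu_k$ over the eigenvalues carrying the target's energy, which gives exponential decay multiplied by the RKHS norm. Without this lemma your argument produces either a $\sqrt{m}$-sized prefactor or no $\|\bm{\beta}\|_2^{p}$ at all, and the $n$-fold convergence gap between MLPs and GNNs---the entire content of the theorem---is not actually derived.
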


One concluding message from Theorem \ref{thm:conv} is that a proper neural architecture improves the convergence rate. Specifically, both the permutation invariance property and the activation function influence the convergence upper bound. Thus, the convergence rate can be improved with a specialized design of these components.

\textbf{Impact of permutation invariance}: In the bound, we see that the convergence rate of MLPs depends on $\lambda_{\min}$, while the rate of GNNs depends on $\sum_{i=1}^n \lambda_i$ if the target function $\bm{u}^*$ is permutation invariant. This is because the architecture of GNNs improves the optimization landscape. We plot the conditional number of infinitely wide MLPs \cite{arora2019fine} and GNNs \cite{du2019graph} in Fig. \ref{fig:landscape}. We see that as $n$ grows, the conditional number of MLPs becomes larger while that of GNNs remains the same. This shows that as the problem size grows, the convergence of MLPs slows down dramatically while that of GNNs does not. This impedes MLPs to achieve a low training loss at the training stage and is the main reason why GNNs can achieve near-optimal performance with a large number of users while MLPs fail to do so.

\textbf{Impact of activation functions and unrolling}: The activation function influences the convergence by controlling the term $c_{p,\sigma}$. The constant $c_{p, \sigma}$ is to measure the similarity between the activation function $\sigma$ and the target function $\bm{u}^*$. For example, denoting $\sigma_1(x) = x^2$ and $\sigma_2(x) = \max(0,x)$, we have $c_{2,\sigma_1} = 1$ and $c_{2,\sigma_2} = \frac{1}{2\pi}$. This implies that if the target function is quadratic, the neural network with quadratic activation will converge faster than neural networks with other activation functions. In the deep unrolling methods \cite{he2019model}, we can view the operations borrowed from classic algorithms as activation functions. Thus, for tasks where precise mathematical modelling is available, deep unrolling can accelerate the training of both MLPs and GNNs \cite{chowdhury2021unfolding}.

\textbf{No free lunch}: The previous discussion shows that GNNs converge faster than MLPs for permutation invariant target functions. Nevertheless, when the target function is not permutation invariant, GNNs may have a worse conditional number than MLPs as shown in Fig. \ref{fig:landscape} (d). This implies that GNNs are good at learning permutation invariant functions while performing poorly in learning non-permutation invariant functions.

In the context of communication networks, as discussed in Section \ref{sec:pre}, permutation invariance commonly exists, so GNNs stand out as a promising neural architecture.

\subsection{Generalization}
In this section, we analyze the generalization of GNNs and MLPs, based on \cite{arora2019fine,du2019graph,xu2019what}. We begin with a classic result on the generalization error of kernel methods.
\begin{theorem}\label{thm:gen_all} \cite{bartlett2002rademacher,du2019graph} Given $m$ training data $\{\bm{x}_i, y_i\}_{i=1}^m$ drawn i.i.d. from the underlying distribution $\mathcal{D}$. Consider a loss function $\ell:\mathbb{R} \times \mathbb{R} \rightarrow [0,1]$ that is 1-Lipschitz in the first argument. With probability $1-\delta$, the population loss of infinitely wide neural networks is bounded by $\mathcal{E}_{\text{gen}} = \mathbb{E}_{\bm{x} \sim \mathcal{D}} [\ell(f(\bm{x},y)] = \mathcal{O}\left( \frac{\sqrt{\bm{y}^T(\bm{H}^*)^{-1} \bm{y} \cdot \trace(\bm{H}^*)}}{m}+\sqrt{\frac{\log(1/\delta)}{m}}\right)$.
\end{theorem}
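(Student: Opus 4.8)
The plan is to follow the standard Rademacher-complexity route for kernel methods, specialized to the NTK regime in which the infinitely wide network is exactly equivalent to kernel regression with Gram matrix $\bm{H}^*$. First I would invoke the classical Bartlett--Mendelson bound \cite{bartlett2002rademacher}: for a loss $\ell$ taking values in $[0,1]$ and $1$-Lipschitz in its first argument, with probability at least $1-\delta$ over the draw of the $m$ samples, the population loss of any predictor $f$ in a class $\mathcal{F}$ is controlled by its empirical loss plus twice the empirical Rademacher complexity $\mathcal{R}_m(\mathcal{F})$ plus a $O(\sqrt{\log(1/\delta)/m})$ deviation term. The second summand in the claimed bound is exactly this deviation term, so the crux is to identify the right class $\mathcal{F}$ and to show that $\mathcal{R}_m(\mathcal{F}) = O(\sqrt{\bm{y}^T(\bm{H}^*)^{-1}\bm{y}\cdot\trace(\bm{H}^*)}/m)$.

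Next I would characterize the predictor that gradient descent actually learns. In the infinite-width limit the kernel $\bm{H}(t)$ appearing in \eqref{eq:ntk} is frozen at $\bm{H}^*$, so training the network coincides with kernel regression in the reproducing kernel Hilbert space $\mathcal{H}$ associated with the NTK. For the squared loss the gradient flow converges to the minimum-RKHS-norm interpolant of the labels, whose squared norm is precisely $\|f\|_{\mathcal{H}}^2 = \bm{y}^T(\bm{H}^*)^{-1}\bm{y}$ by the representer theorem. It therefore suffices to take $\mathcal{F} = \{f \in \mathcal{H} : \|f\|_{\mathcal{H}} \le B\}$ with $B = \sqrt{\bm{y}^T(\bm{H}^*)^{-1}\bm{y}}$; since this predictor interpolates the training data, its empirical loss contributes only a lower-order term that is absorbed into the $O(\cdot)$.

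The remaining step is the Rademacher-complexity bound for an RKHS ball. By the reproducing property and Jensen's inequality, $\mathcal{R}_m(\mathcal{F}) = \frac{B}{m}\,\mathbb{E}_{\bm{\epsilon}}\|\sum_i \epsilon_i K(\bm{x}_i,\cdot)\|_{\mathcal{H}} \le \frac{B}{m}\sqrt{\sum_i K(\bm{x}_i,\bm{x}_i)} = \frac{B}{m}\sqrt{\trace(\bm{H}^*)}$, where the Rademacher signs are handled using $\mathbb{E}[\epsilon_i\epsilon_j]=\delta_{ij}$. Substituting $B$ yields the leading term $\sqrt{\bm{y}^T(\bm{H}^*)^{-1}\bm{y}\cdot\trace(\bm{H}^*)}/m$, which completes the bound. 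The main obstacle is the second paragraph: rigorously justifying that the non-convex, finitely parameterized network trained by gradient descent tracks this frozen-kernel predictor and inherits its RKHS norm. In the strict infinite-width idealization this equivalence is exact, but making it quantitative for large-but-finite width requires controlling the deviation $\|\bm{H}(t) - \bm{H}^*\|$ uniformly along the training trajectory, as in \cite{arora2019fine,du2019graph}; this perturbation analysis, rather than the Rademacher computation, is where the technical effort concentrates.
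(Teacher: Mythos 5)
The paper gives no proof of this theorem at all---it is imported by citation from \cite{bartlett2002rademacher,du2019graph}---and your proposal reconstructs exactly the argument those references use: the Bartlett--Mendelson bound (population loss $\leq$ empirical loss $+ 2\mathcal{R}_m(\mathcal{F}) + O(\sqrt{\log(1/\delta)/m})$), the characterization of the NTK-regime predictor as the minimum-norm kernel interpolant with $\|f\|_{\mathcal{H}} = \sqrt{\bm{y}^T(\bm{H}^*)^{-1}\bm{y}}$, and the $\frac{B}{m}\sqrt{\trace(\bm{H}^*)}$ Rademacher bound for the RKHS ball, so your approach is correct and essentially identical to the source the paper relies on. The only detail you elide, which the cited works handle explicitly, is that the radius $B$ is data-dependent (it involves $\bm{y}$ and $\bm{H}^*$), so the uniform bound must be applied over a discretized grid of radii via a union bound, costing only logarithmic factors absorbed into the $\mathcal{O}(\cdot)$.
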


\textbf{Higher sample efficiency implies better generalization}: Recently, there is a growing interest in sample-efficient neural architectures for solving communication problems \cite{shen2018lora,lee2019graph,sun2020reducing}. Theorem \ref{thm:gen_all} suggests that the generalization error $\mathcal{E}_{\text{gen}}$ is inversely propositional to the number of training data points $m$. Thus, a neural architecture with a higher sample efficiency results in a smaller test error.

The next theorem analyzes the generalization error of MLPs and GNNs, which is based on \cite{arora2019fine,du2019graph}.
\begin{theorem}\label{thm:gen} (GNNs generalize better than MLPs)
    Suppose we train two-layer neural networks in the NTK regime with squared loss function, and the target function is $\bm{u}^*(\bm{x}_1, \cdots \bm{x}_n) = \sum_{i = 1}^n (\bm{\beta}^T \bm{x}_i)^{p}$. Then for GNNs, with probability $1-\delta$, we have $\mathcal{E}_{\text{gen}}^{\text{GNN}} \leq \mathcal{O}\left(\frac{c_{\sigma,p} \|\bm{\beta}\|_2^{p} }{m} + \sqrt{\frac{\log(1/\delta)}{m}}\right)$. For MLPs, with probability $1-\delta$,$ \mathcal{E}_{\text{gen}}^{\text{MLP}} \leq \mathcal{O}\left(\frac{n c_{\sigma,p} \|\bm{\beta}\|_2^{p} }{m} + \sqrt{\frac{\log(1/\delta)}{m}}\right)$, where $c_{p,\sigma}$ is a constant related to the activation function $\sigma$ and the degree $p$ and activation function $\sigma(\cdot)$.
\end{theorem}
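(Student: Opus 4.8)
The plan is to derive both bounds directly from the kernel generalization result of Theorem \ref{thm:gen_all}, so that the entire argument reduces to controlling the single quantity $\bm{y}^T(\bm{H}^*)^{-1}\bm{y} \cdot \trace(\bm{H}^*)$ separately for the GNN and the MLP kernels, where $\bm{y} = (\bm{u}^*(\bm{X}^{(s)}))_{s=1}^m$ collects the target values $\bm{u}^*(\bm{X}) = \sum_{i=1}^n (\bm{\beta}^T \bm{x}_i)^p$ over the $m$ training graphs. The heart of the matter is to show that this permutation-invariant target sits in the GNN's RKHS with a complexity independent of $n$, while it sits in the MLP's RKHS with a complexity that grows with $n$; substituting these two estimates into Theorem \ref{thm:gen_all} then produces the $\mathcal{O}(c_{\sigma,p}\|\bm{\beta}\|_2^p / m)$ rate for GNNs and the $n$-times-larger rate for MLPs.

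First I would expand each NTK through its feature map. For a two-layer network with activation $\sigma$, the single-node kernel $k(\bm{x},\bm{x}')$ admits a Hermite (equivalently Taylor) expansion whose degree-$p$ component is governed by a coefficient $c_{\sigma,p}$ measuring the overlap between $\sigma$ and the degree-$p$ monomial; this is precisely the constant in the statement, and it is what distinguishes, e.g., the quadratic and ReLU cases. Using this expansion, the scalar target $(\bm{\beta}^T\bm{x})^p$ can be written as $\langle \bm{w}_p, \Phi(\bm{x})\rangle$ for a feature-space weight with $\|\bm{w}_p\|^2 \le c_{\sigma,p}\|\bm{\beta}\|_2^{2p}$. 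Next I would exploit the structural difference between the two architectures. For the GNN, sum pooling together with weight sharing makes the network feature map factor as $\Psi_{\text{GNN}}(\bm{X}) = \sum_{i=1}^n \Phi(\bm{x}_i)$, so the target decomposes, using the \emph{same} $\bm{w}_p$, as $\bm{u}^*(\bm{X}) = \langle \bm{w}_p, \sum_{i=1}^n \Phi(\bm{x}_i)\rangle = \langle \bm{w}_p, \Psi_{\text{GNN}}(\bm{X})\rangle$; hence $\bm{y}^T(\bm{H}^*_{\text{GNN}})^{-1}\bm{y} \le \|\bm{w}_p\|^2 \le c_{\sigma,p}\|\bm{\beta}\|_2^{2p}$ with no $n$-dependence, because the permutation-invariant target aligns with the large-eigenvalue "sum" direction of $\bm{H}^*_{\text{GNN}}$. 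For the MLP, the feature map acts on the concatenated $nd$-dimensional input and does not factor over nodes: writing each summand as $(\tilde{\bm{\beta}}_i^T\bm{X})^p$ with $\tilde{\bm{\beta}}_i$ supporting $\bm{\beta}$ on the $i$-th block, the target must be assembled from $n$ separate block-features, spreading $\bm{y}$ across many smaller-eigenvalue directions of $\bm{H}^*_{\text{MLP}}$ and inflating $\bm{y}^T(\bm{H}^*_{\text{MLP}})^{-1}\bm{y} \cdot \trace(\bm{H}^*_{\text{MLP}})$ by a factor of order $n^2$; taking the square root yields the single extra factor $n$ in the generalization bound.

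The hard part will be making the MLP estimate rigorous, namely showing that the combined quantity $\bm{y}^T(\bm{H}^*_{\text{MLP}})^{-1}\bm{y} \cdot \trace(\bm{H}^*_{\text{MLP}})$ scales as exactly $n^2$ relative to the GNN value, rather than $O(1)$ or $n^4$ (the naive triangle-inequality bound on $\|\sum_i \bm{w}_{p,i}\|^2$). This requires tracking the off-diagonal structure of the MLP NTK on inputs that differ only in disjoint coordinate blocks, so as to control the cross-RKHS inner products $\langle \bm{w}_{p,i}, \bm{w}_{p,j}\rangle$ and hence the precise eigenvector alignment between $\bm{y}$ and $\bm{H}^*_{\text{MLP}}$; this is exactly where the concrete Hermite-coefficient computation and the input-distribution assumptions enter. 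A secondary technical point is verifying, under the stated data normalization, that $\trace(\bm{H}^*) = \Theta(m)$ holds uniformly in $n$ for both kernels, so that the trace factor is comparable across architectures and does not itself introduce spurious $n$-dependence; with that in hand, the reduction of the first paragraph closes the argument.
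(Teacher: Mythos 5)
Your overall route is the same one the paper takes: the paper gives no self-contained proof of Theorem \ref{thm:gen} but defers to \cite{arora2019fine} (for the MLP bound) and \cite{du2019graph} (for the GNN bound), and those references do exactly what your first two paragraphs describe --- plug into the kernel bound of Theorem \ref{thm:gen_all}, realize $(\bm{\beta}^T\bm{x})^p$ as an RKHS element whose squared norm is controlled by $c_{\sigma,p}\|\bm{\beta}\|_2^{2p}$ via the Hermite/Taylor expansion of the activation, use the sum-pooling feature map $\sum_{i}\Phi(\bm{x}_i)$ for the GNN, and the block decomposition $\sum_i(\tilde{\bm{\beta}}_i^T\bm{X})^p$ for the MLP.

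Your third paragraph, however, misdiagnoses the remaining work in two ways. First, the arithmetic: the triangle inequality gives $\|\sum_i \bm{w}_{p,i}\|^2 \le \left(\sum_i\|\bm{w}_{p,i}\|\right)^2 \le n^2\max_i\|\bm{w}_{p,i}\|^2 \le n^2\, c_{\sigma,p}\|\bm{\beta}\|_2^{2p}$ --- that is $n^2$, not $n^4$. Combined with $\trace(\bm{H}^*_{\mathrm{MLP}}) = O(m)$ and a square root, this yields precisely the extra factor $n$ claimed in the theorem. Second, and more importantly, the theorem asserts only \emph{upper} bounds for both architectures, so no lower bound on the MLP quantity is required: there is no need to show the scaling is ``exactly $n^2$,'' no need to control the cross inner products $\langle\bm{w}_{p,i},\bm{w}_{p,j}\rangle$, and no need for any eigenvector-alignment analysis of $\bm{H}^*_{\mathrm{MLP}}$. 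Your ``hard part'' is therefore spurious; the argument of your second paragraph, closed out by the naive triangle inequality, already proves the statement. (The flip side is that the theorem, as stated and as proved, compares two upper bounds and hence does not strictly establish that MLPs \emph{must} generalize worse --- but that is a limitation of the statement itself, shared by the paper.) The one genuine technical care point you correctly identify is the trace normalization: with unnormalized sum pooling the GNN kernel diagonal scales like $n^2$, so one needs the input/readout scaling conventions of the cited works to ensure $\trace(\bm{H}^*) = O(m)$ uniformly in $n$ for both kernels before Theorem \ref{thm:gen_all} delivers comparable rates.
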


Similar to the convergence results, the generalization error is also influenced by the permutation invariance and activation function, and GNNs are superior if the target function is permutation invariant.

\begin{figure*}[htbp]
	\centering
	\subfigure[MLPs.]{
		\begin{minipage}[t]{0.4\linewidth}
			\centering
			\includegraphics[width=1\linewidth]{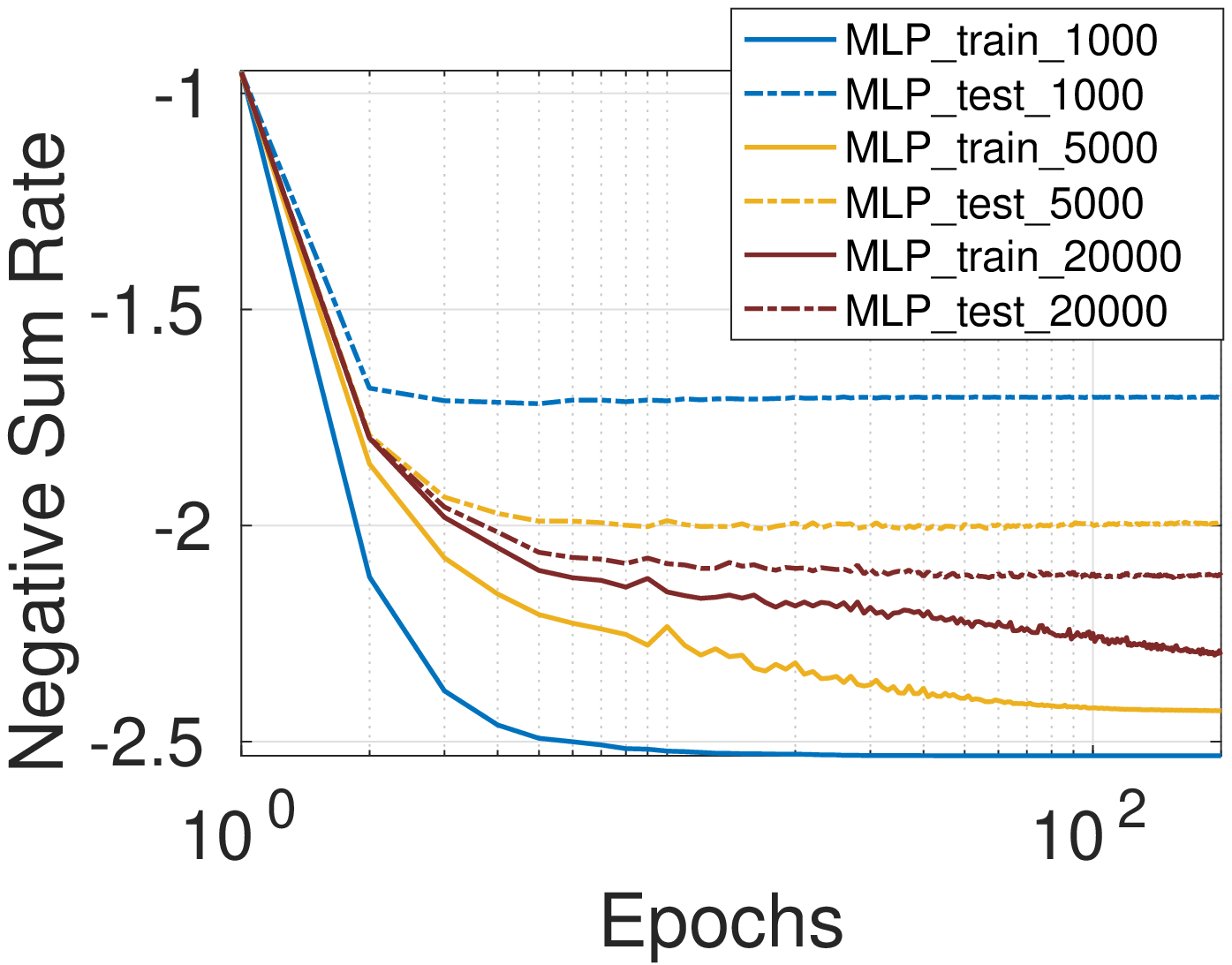}
			\label{fig:e2e}
			%\caption{}
		\end{minipage}%
	}%
	\subfigure[GNNs.]{
		\begin{minipage}[t]{0.4\linewidth}
			\centering
			\includegraphics[width=1\linewidth]{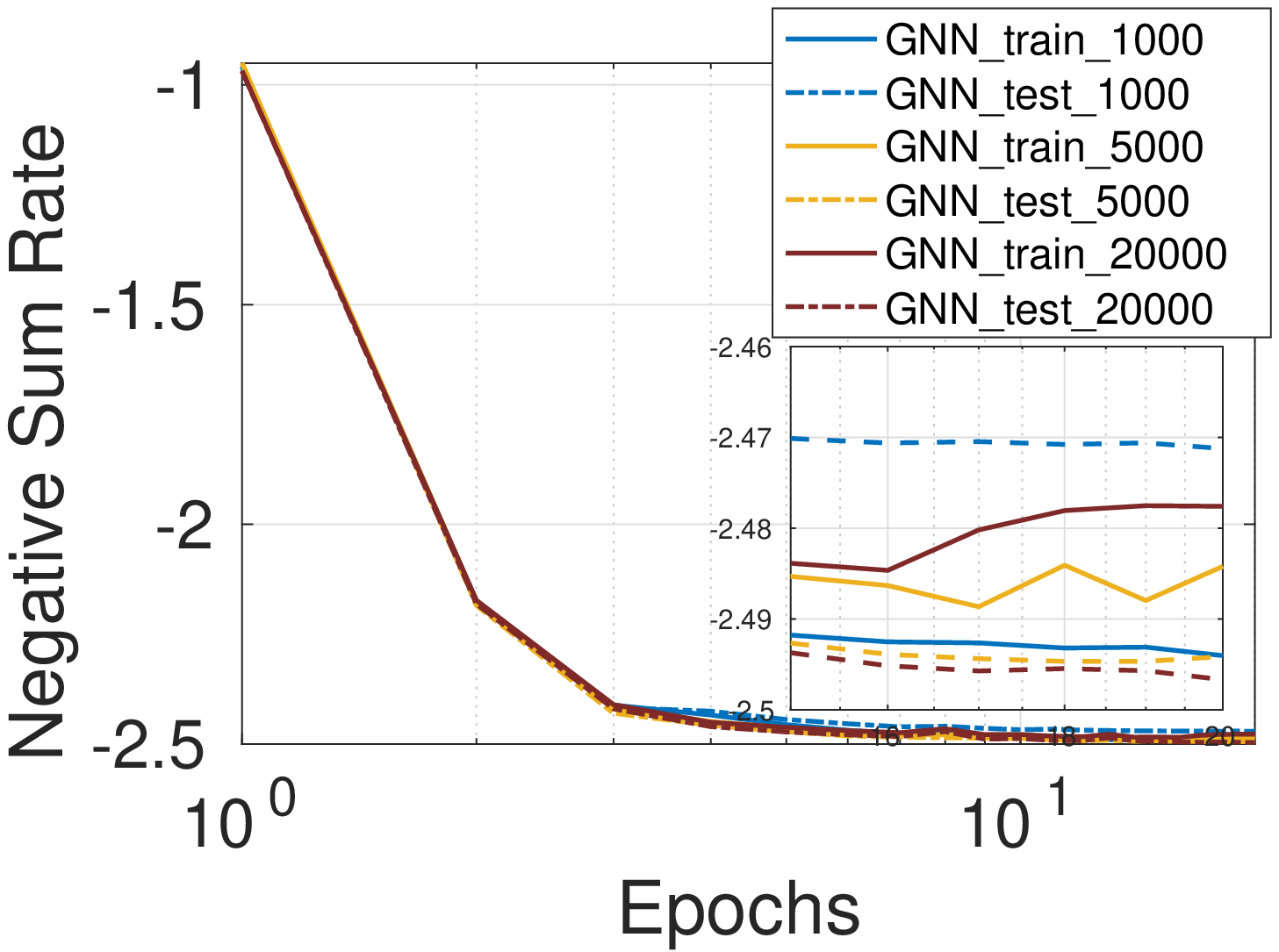}
			\label{fig:opl}
			%\caption{}
		\end{minipage}%
	}%
	\centering
	\caption{Convergence and generalization of MLPs and GNNs with different numbers of training samples. The loss value is the negative sum rate. Different curve represents different numbers of training samples. For example, MLP\_train\_20000 is the training loss curve of the MLP with $20000$ training samples.}
	\label{fig:tradeoff}
\end{figure*}

\subsection{Can MLPs Match GNNs with Sufficient Data?}
Theorem \ref{thm:gen} indicates that with a finite number of training samples (i.e., $m$), there is a substantial gap in the generalization performance between MLPs and GNNs, proportional to the number of nodes (i.e., $n$). In some problems, training data can be easily generated from simulations or ray tracing \cite{alkhateeb2019deepmimo}, which allows us to have sufficiently many training samples. A natural question to ask is: \emph{Given a sufficiently large amount of data, can MLPs perform as well as GNNs?} The answer is yes if we can train the model for infinitely long time. In practice, however, a larger dataset will make the training more difficult. This is because the smallest eigenvalue of $\bm{H}^* \in \mathbb{R}^{m \times m}$ shrinks as the number of training samples $m$ increases. For MLPs, the smallest eigenvalue of $\bm{H}^*$ will be reduced when the number of training samples increases, which increases the training difficulty. For GNNs with permutation invariant target functions, this phenomenon is not obvious due to the improvement of the landscape by neural architectures. In Fig. \ref{fig:tradeoff}, we follow the system setting of Section V.A in \cite{shen2020graph} and test MLPs and GNNs in the $K$-user interference channel power control problem with $K=20$. It shows that more training data slow down the convergence of MLPs to a large extent, which deteriorates the performance. Thus, GNNs are superior to MLPs even if a very large number of samples are available for training.

\section{Conclusions}
This paper theoretically investigated the importance of neural architectures when applying deep learning in communication networks. We proved that by exploiting the permutation invariance property, GNNs converge faster and generalize better than MLPs. For future directions, it is interesting to extend the analysis to other neural architectures, which will lead to a systematic and principled design of neural architectures in the area of machine learning for communication.

\bibliographystyle{ieeetr}
\bibliography{ref}

\end{document}